\newtheorem{theorem}{\bf Theorem}
\newtheorem{lemma}{\bf Lemma}
\newtheorem{remark}{\bf Remark}
\newtheorem{definition}{\bf Definition}
\newtheorem{proposition}{\bf Proposition}
\newcommand{\PT}{{\cal PT}}
\newcommand{\ltz}{{l^2}}
\newcommand{\scell}[2][c]{\begin{tabular}[#1]{@{}c@{}}#2\end{tabular}}
\title{\bf Breathers in Hamiltonian $\PT$-symmetric chains \\ of coupled pendula under a resonant periodic force}
\author{Alexander Chernyavsky$^{1}$ and Dmitry E. Pelinovsky$^{1,2}$ \\
{\small $^{1}$ Department of Mathematics, McMaster University, Hamilton, Ontario, L8S 4K1, Canada} \\
{\small $^{2}$ Department of Applied Mathematics, Nizhny Novgorod State Technical University, Nizhny Novgorod, Russia }}
\begin{document}

\maketitle

\begin{abstract}
We derive a Hamiltonian version of the $\PT$-symmetric discrete nonlinear Schr\"{o}dinger equation
that describes synchronized dynamics of coupled pendula driven by a periodic movement of their common strings.
In the limit of weak coupling between the pendula, we classify the existence and spectral stability of breathers (time-periodic
solutions localized in the lattice) supported near one pair of coupled pendula. Orbital stability or instability of
breathers is proved in a subset of the existence region.
\end{abstract}

\section{Introduction}

Synchronization is a dynamical process where two or more interacting oscillatory systems end up with identical
movement. In 1665 Huygens experimented with maritime pendulum clocks and discovered the
anti-phase synchronization of two pendulums clocks mounted on the common frame~\cite{huygens}.
Since then, synchronization has become a basic concept in nonlinear and complex systems \cite{blekhman}.
Such systems include, but are not limited to, musical instruments, electric power systems, and
lasers. There are numerous applications in mechanical~\cite{nijmeijer} and electrical~\cite{yasser} engineering.
New applications are found in mathematical biology such as synchronous variation of cell nuclei, firing of
neurons, forms of cooperative behavior of animals and humans~\cite{pikovsky}.

Recently, Huygen's experiment has been widely discussed and several experimental devices were
built~\cite{bennett,kumon,pantaleone}. It was shown that two real mechanical clocks when mounted to a horizontally
moving beam can synchronize both in-phase and anti-phase~\cite{czolczynski}. In all these experiments
synchronization was achieved due to energy transfer via the oscillating beam, supporting Huygen's
intuition~\cite{bennett}.

One of the rapidly developing areas in between physics and mathematics, is the topic of $\mathcal{PT}$-symmetry,
which has started as a way to characterize non-Hermitian Hamiltonians in quantum mechanics~\cite{bender}.
The key idea is that a linear Schr{\"o}dinger operator with a complex-valued potential,
which is symmetric with respect to combined parity ($\mathcal{P}$) and
time-reversal($\mathcal{T}$) transformations, may have a real spectrum up to a certain critical value
of the complex potential amplitude. In nonlinear systems, this distinctive feature may
lead to existence of breathers (time-periodic solutions localized in space) as continuous families
of their energy parameter.

The most basic configuration having $\mathcal{PT}$ symmetry is a dimer, which represents a system of
two coupled oscillators, one of which has damping losses and the other one gains some energy from external sources.
This configuration was studied in numerous laboratory experiments involving electric circuits~\cite{schindlerExp},
superconductivity~\cite{rubinsteinExp}, optics~\cite{barashenkovExp,ruterExp} and microwave cavities~\cite{bittnerExp}.

In the context of synchronization of coupled oscillators in a $\mathcal{PT}$-symmetric system,
one of the recent experiments was performed by Bender et al.~\cite{benderExp}. These authors
considered a $\mathcal{PT}$-symmetric Hamiltonian system
describing the motion of two coupled pendula whose bases were connected by a horizontal rope
which moves periodically in resonance with the pendula.
The phase transition phenomenon, which is typical for $\mathcal{PT}$-symmetric
systems, happens when some of the real eigenvalues of the complex-valued Hamiltonian
become complex. The latter regime is said to have \emph{broken} $\mathcal{PT}$ symmetry.

On the analytical side, dimer equations were found to be completely integrable~\cite{barashenkovDimer,susanto}.
Integrability of dimers is obtained by using Stokes variables and it is lost when more coupled
nonlinear oscillators are added into a $\PT$-symmetric system. Nevertheless, it was understood recently
\cite{barashenkov,barashenkovInt} that there is a remarkable class of $\PT$-symmetric dimers
with cross-gradient Hamiltonian structure, where the real-valued Hamiltonians exist
both in  finite and infinite chains of coupled nonlinear oscillators.
Analysis of synchronization in the infinite chains of coupled oscillators in such class of models is
a subject of this work.

In the rest of this section, we describe how this paper is organized. We
also describe the main findings obtained in this work.

Section \ref{model} introduces the main model of coupled pendula driven by a resonant 
periodic movement of their common strings. See Figure \ref{pendulapic} for a schematic picture.
By using an asymptotic multi-scale method, the oscillatory dynamics of coupled pendula is reduced
to a $\PT$-symmetric discrete nonlinear Schr\"{o}dinger (dNLS) equation with gains and losses.
This equation generalizes the dimer equation derived in \cite{barashenkovInt,benderExp}.

Section \ref{symmetry} describes symmetries and conserved quantities for the $\PT$-symmetric dNLS equation.
In particular, we show that the cross-gradient Hamiltonian structure obtained in \cite{barashenkov,barashenkovInt}
naturally appears in the asymptotic reduction of the original Hamiltonian structure
of Newton's equations of motion for the coupled pendula.

Section \ref{breathers} is devoted to characterization of breathers, which are time-periodic
solutions localized in the chain. We show that depending on parameters
of the model (such as detuning frequency, coupling constant, driving force amplitude),
there are three possible types of breather solutions. For the first type, breathers of small
and large amplitudes are connected to each other and do not extend to symmetric
synchronized oscillations of coupled pendula. In the second and third types,
large-amplitude and small-amplitude breathers are connected to the symmetric synchronized oscillations
but are not connected to each other. See Figure \ref{branches} with branches (a), (b), and (c),
where the symmetric synchronized oscillations correspond to the value $E = 0$ and the
breather amplitude is given by parameter $A$.

Section \ref{zero-equilibrium} contains a routine analysis of linear stability of the zero equilibrium,
where the phase transition threshold to the broken $\PT$-symmetry phase is explicitly found.
Breathers are only studied for the parameters where the zero equilibrium is linearly stable.

Section \ref{variational} explores the Hamiltonian structure of the $\PT$-symmetric dNLS equation
and characterizes breathers obtained in Section \ref{breathers} from their energetic point of view.
We show that the breathers for large value of parameter $E$ appear to be saddle points of the Hamiltonian function
between continuous spectra of positive and negative energy, similar to the standing waves in the Dirac models.
Therefore, it is not clear from the energetic point of view if such breathers are linearly or nonlinearly stable.
On the other hand, we show that the breathers for smaller values of parameter $E$ appear to be saddle points
of the Hamiltonian function with a negative continuous spectrum and finitely many (either three or one) positive eigenvalues.

Section \ref{stability} is devoted to analysis of spectral and orbital stability of breathers.
For spectral stability, we use the limit of small coupling constant between the oscillators
(the same limit is also used in Sections \ref{breathers} and \ref{variational})
and characterize eigenvalues of the linearized operator. The main analytical results are
also confirmed numerically. See Figure \ref{spectra} for the three types of breathers.
Depending on the location of the continuous spectral bands
relative to the location of the isolated eigenvalues, we are able to prove nonlinear
orbital stability of breathers for branches (b) and (c).
We are also able to characterize instabilities of these types of breathers that
emerge depending on parameters of the model. Regarding branch (a), nonlinear stability analysis
is not available by using the energy method. Our follow-up work
\cite{ChernPel} develops a new method of analysis to prove the
long-time stability of breathers for branch (a).

The summary of our findings is given in the concluding Section \ref{conclusion},
where the main results are shown in the form of Table 1.

\section{Model}
\label{model}

A simple yet universal model widely used to study coupled nonlinear oscillators is
the Frenkel-Kontorova (FK) model \cite{kontorova}. It describes a chain of classical particles coupled to their neighbors
and subjected to a periodic on-site potential. In the continuum approximation, the FK
model reduces to the sine-Gordon equation, which is exactly integrable. The FK model is known to describe
a rich variety of important nonlinear phenomena, which find applications in solid-state physics and
nonlinear science~\cite{kivshar}.

We consider here a two-array system of coupled pendula, where each pendulum is
connected to the nearest neighbors by linear couplings. Figure \ref{pendulapic} shows
schematically that each array of pendula is connected in the longitudinal direction by the torsional
springs, whereas each pair of pendula is connected in the transverse direction by a common string.
Newton's equations of motion are given by
\begin{eqnarray}
\left\{ \begin{array}{l} \ddot{x}_n + \sin(x_n) = C \left( x_{n+1} - 2x_n + x_{n-1} \right) + D y_n, \\
 \ddot{y}_n + \sin(y_n) = C \left( y_{n+1} - 2 y_n + y_{n-1} \right) + D x_n,
\end{array} \right. \quad n \in \mathbb{Z}, \quad t \in \mathbb{R},
\label{oscillators}
\end{eqnarray}
where $(x_n,y_n)$ correspond to the angles of two arrays of pendula, dots denote derivatives of angles with respect
to time $t$, and the positive parameters $C$ and $D$ describe couplings between the two arrays
in the longitudinal and transverse directions, respectively. 
The type of coupling between the two pendula with the angles $x_n$ and $y_n$ 
is referred to as the {\em direct coupling} between nonlinear oscillators (see Section 8.2 in \cite{pikovsky}).

We consider oscillatory dynamics of coupled pendula under the following assumptions.
\begin{itemize}
\item[(A1)] The coupling parameters $C$ and $D$ are small. Therefore, we can introduce a small parameter $\mu$ such that
both $C$ and $D$ are proportional to $\mu^2$.

\item[(A2)] A resonant periodic force is applied to the common strings for each pair of coupled pendula.
Therefore, $D$ is considered to be proportional to $\cos(2 \omega t)$, where $\omega$ is selected
near the unit frequency of linear pendula indicating the $1 : 2$ parametric resonance between the force and the pendula.
\end{itemize}

\begin{figure}[h!]
\centering
\includegraphics[scale=0.75]{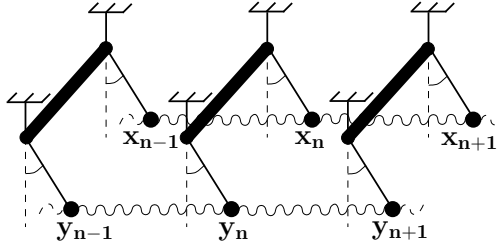}
\caption{A schematic picture for the chain of coupled pendula connected by torsional springs, where each pair is hung on a common string.}
\label{pendulapic}
\end{figure}

Mathematically, we impose the following representation for parameters $C$ and $D(t)$:
\begin{equation}
C = \epsilon \mu^2, \quad D(t) = 2 \gamma \mu^2 \cos(2 \omega t), \quad \omega^2 = 1 + \mu^2 \Omega,
\end{equation}
where $\gamma, \epsilon, \Omega$ are $\mu$-independent parameters, whereas
$\mu$ is the formal small parameter to characterize the two assumptions (A1) and (A2).

In the formal limit $\mu \to 0$, the pendula are uncoupled, and their small-amplitude oscillations
can be studied with the asymptotic multi-scale expansion
\begin{equation}
        \left\{ \begin{array}{l} x_n(t) = \mu \left[ A_n(\mu^2 t) e^{i \omega t} + \bar{A}_n(\mu^2 t) e^{-i \omega t} \right] + \mu^3 X_n(t;\mu), \\[3pt]
y_n(t) = \mu \left[ B_n(\mu^2 t) e^{i \omega t} + \bar{B}_n(\mu^2 t) e^{-i \omega t} \right] + \mu^3 Y_n(t;\mu),
\end{array} \right.
\label{expansion}
\end{equation}
where $(A_n,B_n)$ are amplitudes for nearly harmonic oscillations and $(X_n,Y_n)$ are remainder terms.
In a similar context of single-array coupled nonlinear oscillators, it is shown in \cite{PPP}
how the asymptotic expansions like (\ref{expansion}) can be justified. From the conditions
that the remainder terms $(X_n,Y_n)$ remain bounded as the system evolves, the amplitudes $(A_n,B_n)$ are
shown to satisfy the discrete nonlinear Schr\"{o}dinger (dNLS) equations, which bring together all
the phenomena affecting the nearly harmonic oscillations (such as cubic nonlinear terms, the detuning frequency,
the coupling between the oscillators, and the amplitude of the parametric driving force).
A similar derivation for a single pair of coupled pendula is reported in \cite{barashenkovInt}.

Using the algorithm in \cite{PPP} and restricting the scopes of this derivation to the formal level,
we write the truncated system of equations for the remainder terms:
\begin{eqnarray}
\left\{ \begin{array}{l} \ddot{X}_n + X_n = F_n^{(1)} e^{i \omega t} + \overline{F_n^{(1)}} e^{-i \omega t}
+ F_n^{(3)} e^{3 i \omega t} + \overline{F_n^{(3)}} e^{-3 i \omega t} \\
 \ddot{Y}_n + Y_n = G_n^{(1)} e^{i \omega t} + \overline{G_n^{(1)}} e^{-i \omega t}
+ G_n^{(3)} e^{3 i \omega t} + \overline{G_n^{(3)}} e^{-3 i \omega t},
\end{array} \right. \quad n \in \mathbb{Z}, \quad t \in \mathbb{R},
\label{linear-equation}
\end{eqnarray}
where $F_n^{(1,3)}$ and $G_n^{(1,3)}$ are uniquely defined. Bounded solutions
to the linear inhomogeneous equations (\ref{linear-equation}) exist if and only if
$F_n^{(1)} = G_n^{(1)} = 0$ for every $n \in \mathbb{Z}$. Straightforward computations
show that the conditions $F_n^{(1)} = G_n^{(1)} = 0$ are equivalent to
the following evolution equations for slowly varying amplitudes $(A_n,B_n)$:
\begin{eqnarray}
\left\{ \begin{array}{l} 2 i \dot{A}_n = \epsilon \left( A_{n+1} - 2A_n + A_{n-1} \right) + \Omega A_n + \gamma \bar{B}_n + \frac{1}{2} |A_n|^2 A_n, \\
2 i \dot{B}_n = \epsilon \left( B_{n+1} - 2 B_n + B_{n-1} \right) + \Omega B_n + \gamma \bar{A}_n + \frac{1}{2} |B_n|^2 B_n,
\end{array} \right. \quad n \in \mathbb{Z}, \quad t \in \mathbb{R}.
\label{dNLS}
\end{eqnarray}
The system (\ref{dNLS}) takes the form of coupled parametrically forced dNLS equations (see \cite{parametricNLS}
for references on parametrically forced NLS equations).

There exists an invariant reduction of system (\ref{dNLS}) given by 
\begin{equation}
\label{synchr-1}
A_n = B_n, \quad n \in \mathbb{Z}
\end{equation}
to the scalar parametrically forced dNLS equation. 
The reduction (\ref{synchr-1}) corresponds to the symmetric synchronized oscillations of coupled pendula
of the model (\ref{oscillators}) with
\begin{equation}
\label{synchr-2}
x_n = y_n, \quad n \in \mathbb{Z}.
\end{equation}
In what follows, we consider a more general class of synchronized oscillations of coupled
pendula.

It turns out that the model (\ref{dNLS}) can be cast to the form
of the parity--time reversal ($\PT$) dNLS equations \cite{barashenkovInt}. Using the variables
\begin{equation}
\label{variables}
u_n := \frac{1}{4} \left( A_n - i \bar{B}_n \right), \quad v_n := \frac{1}{4} \left( A_n + i \bar{B}_n \right),
\end{equation}
the system of coupled dNLS equations (\ref{dNLS}) is rewritten in the equivalent
form
\begin{eqnarray}
\left\{ \begin{array}{l} 2 i \dot{u}_n = \epsilon \left( v_{n+1} - 2 v_n + v_{n-1} \right) + \Omega v_n + i \gamma u_n +
                2\left[ \left( 2|u_n|^2 + |v_n|^2 \right) v_n + u_n^2 \bar{v}_n \right], \\[3pt]
2 i \dot{v}_n = \epsilon \left( u_{n+1} - 2 u_n + u_{n-1} \right)  + \Omega u_n - i \gamma v_n +
2\left[ \left( |u_n|^2 + 2 |v_n|^2 \right) u_n + \bar{u}_n v_n^2 \right], \end{array} \right.
\label{PT-dNLS}
\end{eqnarray}
which is the starting point for our analytical and numerical work. The invariant reduction
(\ref{synchr-1}) for system (\ref{dNLS}) becomes
\begin{equation}
\label{reduction}
{\rm Im} (e^{\frac{i \pi}{4}} u_n) = 0, \quad {\rm Im} (e^{-\frac{i \pi}{4}} v_n) = 0, \quad n \in \mathbb{Z}.
\end{equation}

Without loss of generality, one can scale parameters $\Omega$, $\epsilon$, and $\gamma$
by a factor of two in order to eliminate the numerical factors in the system (\ref{PT-dNLS}). Also in the context of hard
nonlinear oscillators (e.g. in the framework of the $\phi^4$ theory), the cubic nonlinearity may have the opposite
sign compared to the one in the system (\ref{PT-dNLS}). However, given the applied context of the system
of coupled pendula, we will stick to the  specific form (\ref{PT-dNLS}) in further analysis.

\section{Symmetries and conserved quantities}
\label{symmetry}

The system of coupled dNLS equations (\ref{PT-dNLS}) is referred to as the $\PT$-symmetric dNLS equation
because the solutions remain invariant with respect to the action of the parity $\mathcal{P}$ and time-reversal 
$\mathcal{T}$ operators given by
\begin{equation}
\label{PT-symmetry}
\mathcal{P} \left[ \begin{array}{c} u \\ v \end{array} \right] =
\left[ \begin{array}{c} v \\ u \end{array} \right], \qquad
\mathcal{T} \left[ \begin{array}{c} u(t) \\ v(t) \end{array} \right] =
\left[ \begin{array}{c} \bar{u}(-t) \\ \bar{v}(-t) \end{array} \right].
\end{equation}
The parameter $\gamma$ introduces the gain--loss coefficient in each pair of coupled oscillators
due to the resonant periodic force. In the absence of all other effects,
the $\gamma$-term of the first equation of system (\ref{PT-dNLS}) induces the exponential growth of amplitude $u_n$,
whereas the $\gamma$-term of the second equation induces the exponential decay of amplitude $v_n$, if $\gamma > 0$.

The system (\ref{PT-dNLS}) truncated at a single site (say $n = 0$) is called the $\PT$-symmetric dimer.
In the work of Barashenkov {\em et al.} \cite{barashenkovInt}, it was shown
that all $\PT$-symmetric dimers with physically relevant cubic nonlinearities represent
Hamiltonian systems in appropriately introduced canonical variables. However, the
$\PT$-symmetric dNLS equation on a lattice does not typically have a Hamiltonian form if $\gamma \neq 0$.

Nevertheless, the particular nonlinear functions arising in the system (\ref{PT-dNLS}) correspond
to the $\PT$-symmetric dimers with a cross--gradient Hamiltonian structure \cite{barashenkovInt},
where variables $(u_n,\bar{v}_n)$ are canonically conjugate.
As a result, the system (\ref{PT-dNLS}) on the chain $\mathbb{Z}$
has additional conserved quantities. This fact looked like a mystery
in the recent works \cite{barashenkov,barashenkovInt}.

Here we clarify the mystery in the context of the derivation of the $\PT$-symmetric dNLS
equation (\ref{PT-dNLS}) from the original system (\ref{oscillators}).
Indeed, the system (\ref{oscillators}) of classical Newton particles
has a standard Hamiltonian structure with the energy function
\begin{eqnarray}
\nonumber
H_{x,y}(t) & = & \sum_{n \in \mathbb{Z}} \frac{1}{2} (\dot{x}_n^2 + \dot{y}_n^2) + 2 - \cos(x_n) - \cos(y_n) \\
& \phantom{t} & + \frac{1}{2} C (x_{n+1}-x_n)^2 + \frac{1}{2} C (y_{n+1}-y_n)^2 - D(t) x_n y_n.
\label{energy-oscillators}
\end{eqnarray}
Since the periodic movement of common strings for each pair of pendula
result in the time-periodic coefficient $D(t)$, the energy $H_{x,y}(t)$ is a periodic
function of time $t$. In addition, no other conserved quantities such as momenta exist
typically in lattice differential systems such as the system (\ref{oscillators})
due to broken continuous translational symmetry.

After the system (\ref{oscillators}) is reduced to the coupled dNLS equations (\ref{dNLS})
with the asymptotic expansion (\ref{expansion}), we can write the evolution problem (\ref{dNLS}) in the Hamiltonian form
with the standard straight-gradient symplectic structure
\begin{equation}
2 i \frac{d A_n}{d t} = \frac{\partial H_{A,B}}{\partial \bar{A}_n}, \quad
2 i \frac{d B_n}{d t} = \frac{\partial H_{A,B}}{\partial \bar{B}_n}, \quad n \in \mathbb{Z},
\label{Hamiltonian-dNLS}
\end{equation}
where the time variable $t$ stands now for the slow time $\mu^2 t$ and the energy function is
\begin{eqnarray}
\nonumber
H_{A,B} & = & \sum_{n \in \mathbb{Z}} \frac{1}{4} (|A_n|^4 + |B_n|^4) + \Omega (|A_n|^2 + |B_n|^2) + \gamma ( A_n B_n + \bar{A}_n \bar{B}_n) \\
 & \phantom{t} & - \epsilon |A_{n+1}-A_n|^2 - \epsilon |B_{n+1}-B_n|^2.
\label{energy-dNLS}
\end{eqnarray}
The energy function $H_{A,B}$ is conserved in the time evolution of the Hamiltonian system (\ref{Hamiltonian-dNLS}).
In addition, there exists another conserved quantity
\begin{equation}
\label{charge-dNLS}
Q_{A,B} = \sum_{n \in \mathbb{Z}} (|A_n|^2 - |B_n|^2),
\end{equation}
which is related to the gauge symmetry
$(A,B) \to (A e^{i \alpha},B e^{i \alpha})$ with $\alpha \in \mathbb{R}$
for solutions to the system (\ref{dNLS}).

When the transformation of variables (\ref{variables}) is used, the $\PT$-symmetric dNLS
equation (\ref{PT-dNLS}) is cast to the Hamiltonian form with the cross-gradient symplectic structure
\begin{equation}
2 i \frac{d u_n}{dt} = \frac{\partial H_{u,v}}{\partial \bar{v}_n}, \quad
2 i \frac{d v_n}{dt} = \frac{\partial H_{u,v}}{\partial \bar{u}_n}, \quad n \in \mathbb{Z},
\label{Hamiltonian-PT-dNLS}
\end{equation}
where the energy function is
\begin{eqnarray}
\nonumber
H_{u,v} & = & \sum_{n \in \mathbb{Z}} (|u_n|^2 + |v_n|^2)^2 + (u_n \bar{v}_n + \bar{u}_n v_n)^2 + \Omega (|u_n|^2 + |v_n|^2) \\
& \phantom{t} & - \epsilon |u_{n+1}-u_n|^2 - \epsilon |v_{n+1}-v_n|^2 + i \gamma ( u_n \bar{v}_n - \bar{u}_n v_n).
\label{energy-PT-dNLS}
\end{eqnarray}
The gauge-related function is written in the form
\begin{equation}
\label{charge-PT-dNLS}
Q_{u,v} = \sum_{n \in \mathbb{Z}} (u_n \bar{v}_n + \bar{u}_n v_n).
\end{equation}
The functions $H_{u,v}$ and $Q_{u,v}$ are conserved in the time evolution of the system (\ref{PT-dNLS}).
These functions follow from (\ref{energy-dNLS}) and (\ref{charge-dNLS}) after the transformation
(\ref{variables}) is used. Thus, the cross-gradient Hamiltonian structure of the $\PT$-symmetric 
dNLS equation (\ref{PT-dNLS}) is inherited from the Hamiltonian structure of the coupled oscillator model (\ref{oscillators}).

\section{Breathers (time-periodic solutions)}
\label{breathers}

We characterize the existence of breathers supported by the $\PT$-symmetric dNLS equation (\ref{PT-dNLS}).
In particular, breather solutions are continued for small values of coupling constant $\epsilon$
from solutions of the dimer equation arising at a single site, say the central site at $n = 0$.
We shall work in a sequence space $\ell^2(\mathbb{Z})$ of square integrable complex-valued sequences.

Time-periodic solutions to the $\PT$-symmetric dNLS equation (\ref{PT-dNLS})
are given in the form \cite{pel2,pel3}:
\begin{equation}
\label{stationary}
u(t) = U e^{-\frac{1}{2} i E t}, \quad v(t) = V e^{- \frac{1}{2} i E t},
\end{equation}
where the parameter $E$ is considered to be real, the factor $1/2$ is introduced for convenience, and
the sequence $(U,V)$ is time-independent. The breather (\ref{stationary}) is a localized mode
if $(U,V) \in \ell^2(\mathbb{Z})$, which implies that $|U_n|,|V_n| \to 0$
as $|n| \to \infty$. The breather (\ref{stationary}) is considered to be $\PT$-symmetric
with respect to the operators in (\ref{PT-symmetry}) if $V = \bar{U}$.

The reduction (\ref{reduction}) for symmetric synchronized oscillations
is satisfied if
\begin{equation}
\label{reduction-E}
E = 0 : \quad {\rm Im}( e^{\frac{i \pi}{4}} U_n) = 0, \quad {\rm Im} (e^{-\frac{i \pi}{4}} V_n) = 0,
\quad  n \in \mathbb{Z}.
\end{equation}
The time-periodic breathers (\ref{stationary}) with $E \neq 0$
generalize the class of symmetric synchronized oscillations (\ref{reduction-E}).

The time-independent sequence $(U,V) \in \ell^2(\mathbb{Z})$ can be found from the stationary
$\PT$-symmetric dNLS equation:
\begin{eqnarray}
\left\{ \begin{array}{l} E U_n = \epsilon \left( V_{n+1} - 2 V_n + V_{n-1} \right) + \Omega V_n + i \gamma U_n +
                2\left[ \left( 2|U_n|^2 + |V_n|^2 \right) V_n + U_n^2 \bar{V}_n \right], \\[3pt]
E V_n = \epsilon \left( U_{n+1} - 2 U_n + U_{n-1} \right) + \Omega U_n - i \gamma V_n +
2\left[ \left( |U_n|^2 + 2 |V_n|^2 \right) U_n + \bar{U}_n V_n^2 \right].
\end{array} \right.
\label{eq:statE}
\end{eqnarray}
The $\PT$-symmetric breathers with $V = \bar{U}$ satisfy the following scalar difference equation
\begin{equation}
E U_n  = \epsilon \left( \bar{U}_{n+1} - 2 \bar{U}_n + \bar{U}_{n-1} \right) + \Omega \bar{U}_n + i \gamma U_n +
6 |U_n|^2 \bar{U}_n + 2 U_n^3. \label{eq:statPT}
\end{equation}
Note that the reduction (\ref{reduction-E}) is compatible with equation (\ref{eq:statPT}) in the sense
that if $E = 0$ and $U_n = R_n e^{-i\pi/4}$, then $R$ satisfies a real-valued difference equation.

Let us set $\epsilon = 0$ for now and consider solutions to the dimer equation at the
central site $n = 0$:
\begin{equation}
\label{dimer-states}
(E - i \gamma) U_0 - \Omega \bar{U}_0 = 6 |U_0|^2 \bar{U}_0 + 2 U_0^3.
\end{equation}
The parameters $\gamma$ and $\Omega$ are considered to be fixed, and the breather parameter $E$ is thought
to parameterize continuous branches of solutions to the nonlinear algebraic equation (\ref{dimer-states}).
The solution branches depicted on Figure \ref{branches} are given in the following lemma.

\begin{figure}[t]
\center
\includegraphics[scale=0.6]{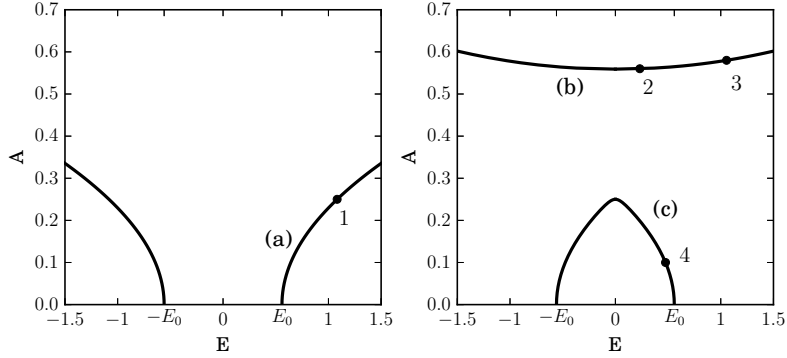}
\caption{Solution branches for the dimer equation (\ref{dimer-states}).}
\label{branches}
\end{figure}

\begin{lemma}
\label{lemma-dimer}
Assume $\gamma \neq 0$. The algebraic equation (\ref{dimer-states}) admits
the following solutions depending on $\gamma$ and $\Omega$:
\begin{itemize}
\item[(a)] $\Omega > |\gamma|$ - two symmetric unbounded branches exist for $\pm E > E_0$,
\item[(b)] $\Omega < |\gamma|$ - an unbounded branch exists for every $E \in \mathbb{R}$,
\item[(c)] $\Omega < -|\gamma|$ - a bounded branch exists for $-E_0 < E < E_0$,
\end{itemize}
where $E_0 := \sqrt{\Omega^2 - \gamma^2}$.
\end{lemma}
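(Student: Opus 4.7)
The plan is to reduce (\ref{dimer-states}) to a scalar real equation by writing $U_0 = A e^{i\theta}$ with $A \geq 0$. Dividing through by $A e^{i\theta}$ and setting $\phi := 2\theta$, the equation becomes
\begin{equation*}
E - i\gamma = (\Omega + 6 A^2)\, e^{-i\phi} + 2 A^2\, e^{i \phi},
\end{equation*}
whose real and imaginary parts yield
\begin{equation*}
E = (\Omega + 8 A^2)\cos\phi, \qquad \gamma = (\Omega + 4 A^2)\sin\phi.
\end{equation*}
Eliminating $\phi$ via $\cos^2\phi + \sin^2\phi = 1$ and writing $z = A^2 > 0$, the existence of solutions reduces to finding positive roots of
\begin{equation*}
g(z) := \frac{E^2}{(\Omega + 8z)^2} + \frac{\gamma^2}{(\Omega + 4z)^2} = 1.
\end{equation*}
Once such a $z$ is in hand, the pair $(\cos\phi,\sin\phi)$ lies on the unit circle and determines $\phi \in [0,2\pi)$ uniquely, giving $U_0 = \sqrt{z}\, e^{i\phi/2}$ up to the trivial $\mathbb{Z}_2$ symmetry $U_0 \mapsto -U_0$.

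The core of the argument is a case analysis of the graph of $g$, driven by the sign of $\Omega$ since the possible poles of $g$ sit at $z_1 = -\Omega/8$ and $z_2 = -\Omega/4$. I would compute
\begin{equation*}
g'(z) = -\frac{16 E^2}{(\Omega+8z)^3} - \frac{8 \gamma^2}{(\Omega+4z)^3}
\end{equation*}
and read off monotonicity on each pole-free interval from the signs of the cubed denominators. In case (a), $\Omega > |\gamma| > 0$ makes $g$ strictly decreasing on $[0,\infty)$ from $g(0) = (E^2+\gamma^2)/\Omega^2$ down to $0$, so $g(z)=1$ admits a unique positive root exactly when $|E| > E_0$; the two symmetric branches are unbounded because $g(z) = 1$ forces $z \sim E^2/64$ as $|E| \to \infty$. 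In case (b), $\Omega < |\gamma|$ guarantees that on the unbounded component $(\max(0,z_2), \infty)$ the function $g$ drops monotonically from a value $\geq 1$ (or from $+\infty$ when $\Omega \leq 0$) to $0$, yielding the unique unbounded branch for every $E \in \mathbb{R}$. In case (c), the stronger condition $\Omega < -|\gamma|$ additionally makes $g(0) < 1$ precisely when $E^2 < E_0^2$; on $(0,z_1)$ both cubed denominators in $g'$ are negative so $g'>0$, and $g$ rises strictly from $g(0)$ up to $+\infty$ at $z_1^-$, producing the bounded branch confined to $A^2 < -\Omega/8$.

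The main technical obstacle is the bookkeeping when $\Omega \leq 0$: the function $g$ then has up to two poles in $(0,\infty)$, and monotonicity must be established separately on each of the three subintervals rather than globally. I also need to note that the boundary cases $|E|=E_0$ in (a) and (c) give $g(0)=1$ and hence only the trivial root $z=0$ (i.e., $U_0 = 0$), which is why the branches are stated under the open conditions $\pm E > E_0$ and $-E_0 < E < E_0$. Finally, the reconstruction of $\phi$ from $(\cos\phi,\sin\phi)$ is automatic since $g(z)=1$ places the target pair on the unit circle, so no separate compatibility check is required beyond the derivation above.
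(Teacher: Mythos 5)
Your proof is correct and follows essentially the same route as the paper: the polar substitution $U_0 = A e^{i\theta}$ yields exactly the amplitude--phase relations (\ref{parameterization-A}), your equation $g(A^2)=1$ is just (\ref{branch-E-A}) rearranged, and the case analysis rests on the same monotonicity facts, merely organized as root-counting in $z=A^2$ for fixed $E$ instead of the paper's parametrization of $E^2$ by $A^2$. The only slip is the asymptotics in case (a): $g(z)=1$ forces $64 z^2 \sim E^2$, i.e.\ $A^2 \sim |E|/8$ rather than $z \sim E^2/64$, which does not affect the unboundedness conclusion.
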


\begin{proof}
Substituting the decomposition $U_0 = A e^{i \theta}$ with $A > 0$ and $\theta \in [-\pi,\pi)$
into the algebraic equation (\ref{dimer-states}), we obtain
\begin{equation}
\label{parameterization-A}
\sin(2\theta) = \frac{\gamma}{4 A^2 + \Omega}, \quad \cos(2 \theta) = \frac{E}{8 A^2 + \Omega}.
\end{equation}
Excluding $\theta$ by using the fundamental trigonometric identity, we obtain the
explicit parametrization of the solutions to the algebraic equation (\ref{dimer-states})
by the amplitude parameter $A$:
\begin{equation}
\label{branch-E-A}
E^2 = (8 A^2 + \Omega)^2 \left[ 1 - \frac{\gamma^2}{(4 A^2+\Omega)^2} \right].
\end{equation}
The zero-amplitude limit $A = 0$ is reached if $|\Omega| > |\gamma|$,
in which case $E = \pm E_0$, where $E_0 := \sqrt{\Omega^2 - \gamma^2}$.
If $|\Omega| < |\gamma|$ , the solution branches (if they exist) are bounded away from
the zero solution.

Now we analyze the three cases of parameters $\gamma$ and $\Omega$ formulated in the lemma.

\begin{itemize}
\item[(a)] If $\Omega > |\gamma|$, then the parametrization (\ref{branch-E-A}) yields
a monotonically increasing map $\mathbb{R}^+ \ni A^2 \mapsto E^2 \in (E_0^2,\infty)$ because
\begin{equation}
\label{positivity-A}
\frac{d E^2}{d A^2} = \frac{8 (8A^2 + \Omega)}{(4A^2 + \Omega)^3}
\left[ 2 (4 A^2 + \Omega)^3 - \gamma^2 \Omega \right] > 0.
\end{equation}
In the two asymptotic limits, we obtain from (\ref{branch-E-A}):
$$
E^2 = E_0^2 + \mathcal{O}(A^2) \quad \mbox{\rm as} \quad A \to 0 \quad
\mbox{\rm and} \quad E^2 = 64 A^4 + \mathcal{O}(A^2) \quad \mbox{\rm as} \quad A \to \infty.
$$
See Figure~\ref{branches}(a).

\item[(b)] If $\Omega < |\gamma|$, the parametrization (\ref{branch-E-A}) yields
a monotonically increasing map $(A_+^2,\infty) \ni A^2 \mapsto E^2 \in \mathbb{R}^+$,
where
\begin{equation}
\label{A-plus}
A_+^2 := \frac{|\gamma| - \Omega}{4}.
\end{equation}
Indeed, we note that $4 A^2 + \Omega \geq 4A_+^2 + \Omega = |\gamma| > 0$ and
$$
2 (4 A^2 + \Omega)^3 - \gamma^2 \Omega \geq \gamma^2 (2 |\gamma| - \Omega) > 0,
$$
so that the derivative in  (\ref{positivity-A}) is positive for every $A^2 \geq A_+^2$.
We have
$$
E^2 \to 0 \quad \mbox{\rm as} \quad A^2 \to A_+^2 \quad {\rm and} \quad
E^2 = 64 A^4 + \mathcal{O}(A^2) \quad \mbox{\rm as} \quad A \to \infty.
$$
See Figure~\ref{branches}(b).

\item[(c)] If $\Omega < -|\gamma|$, then the parametrization (\ref{branch-E-A}) yields
a monotonically decreasing map $(0,A_-^2) \ni A^2 \mapsto E^2 \in (0,E_0^2)$,
where
\begin{equation}
\label{A-minus}
A_-^2 := \min\left\{\frac{|\Omega| - |\gamma|}{4},\frac{|\Omega|}{8}\right\}.
\end{equation}
In (\ref{A-minus}), the first choice is made if $|\Omega| \in (|\gamma|,2|\gamma|)$ and
the second choice is made if $|\Omega| \in (2|\gamma|,\infty)$. Both choices are the same
if $|\Omega| = 2 |\gamma|$. We note that
$8 A^2 \leq |\Omega|$, therefore, the derivative (\ref{positivity-A}) needs to be
rewritten in the form
\begin{equation}
\label{negativity-A}
\frac{d E^2}{d A^2} = -\frac{8 (|\Omega| - 8A^2)}{(|\Omega|  - 4 A^2)^3}
\left[ 2 (|\Omega| - 4A^2)^3 - \gamma^2 |\Omega| \right] < 0,
\end{equation}
where $2 (|\Omega| - 4A^2)^3 - \gamma^2 |\Omega| > 0$
for both $|\Omega| \in (|\gamma|,2|\gamma|)$ and $|\Omega| \in [2|\gamma|,\infty)$.
In the two asymptotic limits, we obtain from (\ref{branch-E-A}):
$$
E^2 = E_0^2 + \mathcal{O}(A^2) \quad \mbox{\rm as} \quad A \to 0 \quad \mbox{\rm and}
\quad E^2 \to 0 \quad \mbox{\rm as} \quad A^2 \to A_-^2.
$$
See Figure~\ref{branches}(c).
\end{itemize}
Note that branches (b) and (c) coexist for $\Omega < -|\gamma|$.
\end{proof}

\begin{remark}
The reduction (\ref{reduction-E}) corresponds to the choice:
$$
E = 0, \quad \theta = -\frac{\pi}{4}, \quad  4 A^2 + \Omega + \gamma = 0.
$$
If $\gamma > 0$, this choice corresponds to $A = A_-$ for $\Omega \in (-2|\gamma|,-|\gamma|)$,
that is, the point $E = 0$ on branch (c). If $\gamma < 0$, it corresponds to $A = A_+$ for any $\Omega < |\gamma|$,
that is, the point $E = 0$ on branch (b).
\end{remark}

Every solution of Lemma \ref{lemma-dimer} can be extended to a breather on the chain $\mathbb{Z}$
which satisfies the spatial symmetry condition in addition to the $\PT$ symmetry:
\begin{equation}
\label{spatial-symmetry}
U_{-n} = U_n = \bar{V}_n = \bar{V}_{-n}, \quad n \in \mathbb{Z}.
\end{equation}
In order to prove the existence of the symmetric breather solution
to the difference equation (\ref{eq:statPT}), we use the following implicit function theorem.\\

{\bf Implicit Function Theorem} (Theorem 4.E in \cite{Zeidler}). {\em
Let $X,Y$ and $Z$ be Banach spaces and let $F(x,y)\colon X\times Y \to Z$
be a $C^1$ map on an open neighborhood of the point $(x_0,y_0)\in X\times Y$.
Assume that
$$
F(x_0,y_0) = 0
$$
and that
$$
D_x F(x_0,y_0) \colon X\to Z \text{ is one-to-one and onto. }
$$
There are $r>0$ and $\sigma>0$ such that for each $y$ with
$\| y - y_0 \|_Y \le \sigma$ there exists a unique solution
$x\in X$ of the nonlinear equation $F(x,y) = 0$ with
$\|x - x_0\|_X \le r$. Moreover, the map $Y\owns y\mapsto x \in X$
is $C^1$ near $y = y_0$.
}

\vspace{0.25cm}

With two applications of the implicit function theorem,
we prove the following main result of this section.

\begin{theorem}
\label{theorem-existence-soliton}
Fix $\gamma \neq 0$, $\Omega \neq -2 |\gamma|$, and $E \neq \pm E_0$,
where $E_0 := \sqrt{\Omega^2 - \gamma^2} > 0$ if $|\Omega| > |\gamma|$.
There exists $\epsilon_0 > 0$ sufficiently small and $C_0 > 0$ such that
for every $\epsilon \in (-\epsilon_0,\epsilon_0)$, there exists a unique
solution $U \in l^2(\mathbb{Z})$ to the difference equation (\ref{eq:statPT})
satisfying the symmetry (\ref{spatial-symmetry}) and the bound
\begin{equation}
\label{bound-soliton}
\left| U_0 - A e^{i \theta} \right| + \sup_{n \in \mathbb{N}} |U_n| \leq C_0 |\epsilon|,
\end{equation}
where $A$ and $\theta$ are defined in Lemma \ref{lemma-dimer}. Moreover, the solution $U$
is smooth in $\epsilon$.
\end{theorem}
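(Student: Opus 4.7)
The plan is to perform a Lyapunov--Schmidt reduction that splits the difference equation (\ref{eq:statPT}) into a ``tail'' system $\{F_n(U,\epsilon) = 0\}_{n \geq 1}$ and a scalar ``center'' equation $F_0(U,\epsilon) = 0$, and to apply the Implicit Function Theorem once to each, as the statement preceding the theorem already advertises. Throughout I would work inside the closed subspace of $l^2(\mathbb{Z};\mathbb{C})$ consisting of sequences obeying $U_{-n} = U_n$, so that the symmetry (\ref{spatial-symmetry}) is automatically built in, and I would treat all maps as $C^1$ over $\mathbb{R}$ (identifying $\mathbb{C} \cong \mathbb{R}^2$) since (\ref{eq:statPT}) mixes $U_n$ and $\bar U_n$.

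First I would apply the IFT to the tail: solve $\{F_n(U_0,\tilde U,\epsilon) = 0\}_{n \geq 1}$ for $\tilde U := (U_n)_{n \geq 1} \in l^2(\mathbb{N};\mathbb{C})$ as a smooth function of $(U_0,\epsilon)$ near $(Ae^{i\theta},0)$. At $\epsilon = 0$ the sites decouple, so $\tilde U = 0$ is an obvious root and the Fr\'echet derivative $D_{\tilde U}\{F_n\}$ at that root is block-diagonal in $n$, each block being the $\mathbb{R}$-linear map
\begin{equation*}
\delta U \mapsto (E - i\gamma)\,\delta U - \Omega\,\overline{\delta U},
\end{equation*}
whose $2 \times 2$ real matrix has determinant $E^2 - \Omega^2 + \gamma^2$. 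The hypothesis $E \neq \pm E_0$ (vacuous when $|\Omega| < |\gamma|$) makes this determinant nonzero, so the block operator is boundedly invertible on $l^2(\mathbb{N};\mathbb{C})$. IFT then produces $\tilde U(U_0,\epsilon)$ with $\tilde U(U_0,0) = 0$; smoothness plus the fact that $U_0$ enters the tail only through the $\epsilon$-coupling at $n = 1$ gives $\|\tilde U\|_{l^2} = \mathcal{O}(|\epsilon|)$.

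Next I would substitute back to obtain the reduced scalar equation
\begin{equation*}
\Phi(U_0,\epsilon) := F_0(U_0,\tilde U(U_0,\epsilon),\epsilon) = 0,
\end{equation*}
which at $\epsilon = 0$ is exactly the dimer equation (\ref{dimer-states}), solved by $U_0 = A e^{i\theta}$ from Lemma \ref{lemma-dimer}. A Wirtinger calculation shows the linearization at this root is $D\Phi(A e^{i\theta},0)\,\delta U = a\,\delta U + b\,\overline{\delta U}$ with
\begin{equation*}
a = E - i\gamma - 12 A^2 \cos(2\theta), \qquad b = -\Omega - 12 A^2,
\end{equation*}
whose $2\times 2$ real-matrix determinant equals $|a|^2 - b^2$. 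Eliminating $\theta$ using (\ref{parameterization-A}) and the identity $E^2 = (8A^2+\Omega)^2[1-\gamma^2/(4A^2+\Omega)^2]$ from (\ref{branch-E-A}), this simplifies to
\begin{equation*}
|a|^2 - b^2 = -\frac{16\, A^2}{(4A^2+\Omega)^2}\bigl[\,2(4A^2+\Omega)^3 - \gamma^2 \Omega\,\bigr],
\end{equation*}
which is precisely the bracket that drove the monotonicity in (\ref{positivity-A}) and (\ref{negativity-A}). The branch-by-branch case analysis already carried out in the proof of Lemma \ref{lemma-dimer} shows the bracket is strictly signed on each of (a), (b), (c), vanishing only in the degenerate limit $A = 0$ (which is equivalent to $E = \pm E_0$ on branches (a) and (c)) and, on branch (c), at the single point where $|\Omega| = 2|\gamma|$ and $A^2 = A_-^2$, i.e.\ $E = 0$ with $\Omega = -2|\gamma|$. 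The hypotheses $\gamma \neq 0$, $E \neq \pm E_0$, and $\Omega \neq -2|\gamma|$ rule out both degeneracies, so $D\Phi(Ae^{i\theta},0)$ is invertible and IFT yields a smooth $U_0(\epsilon)$ with $U_0(0) = Ae^{i\theta}$. Reassembling $\bigl(U_0(\epsilon),\tilde U(U_0(\epsilon),\epsilon)\bigr)$ gives the unique symmetric solution, and (\ref{bound-soliton}) follows from the Lipschitz estimates in IFT combined with the embedding $l^2 \hookrightarrow l^\infty$.

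The main obstacle is the second invertibility check: the tail step is nearly automatic once $E \neq \pm E_0$ is assumed, whereas invertibility of $D\Phi$ requires pushing through the Wirtinger computation and then recognizing $|a|^2 - b^2$ as the very same sign-definite expression that controls the monotonicity in Lemma \ref{lemma-dimer}. The role of the exclusion $\Omega \neq -2|\gamma|$ becomes transparent only after this matching, since this is exactly the parameter value at which the two formulas for $A_-^2$ in (\ref{A-minus}) coincide and the bracket develops its only non-trivial zero on branch (c) at $E = 0$.
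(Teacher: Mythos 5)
Your proposal is correct and follows essentially the same route as the paper: a two-step implicit function theorem argument (tail sites first with $U_0$ as a parameter, then the central site), with the same tail Jacobian determinant $E^2+\gamma^2-\Omega^2$ and the same central Jacobian, whose determinant you correctly reduce to $-\tfrac{16A^2}{(4A^2+\Omega)^2}\bigl[2(4A^2+\Omega)^3-\gamma^2\Omega\bigr]$. The only cosmetic difference is that you verify invertibility of the central block by recognizing this bracket as the sign-definite quantity from the monotonicity computation in Lemma \ref{lemma-dimer}, whereas the paper isolates the same condition $2(\Omega+4A^2)^3=\Omega\gamma^2$ and treats it branch by branch in the separate Lemma \ref{proposition-unbounded}; the content is identical.
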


\begin{proof}
In the first application of the implicit function theorem, we consider the following system of algebraic equations
\begin{equation}
E U_n = \epsilon \left( \bar{U}_{n+1} - 2 \bar{U}_n + \bar{U}_{n-1} \right) + \Omega \bar{U}_n + i \gamma U_n +
6 |U_n|^2 \bar{U}_n + 2 U_n^3, \quad n \in \mathbb{N}, \label{fixed-point-1}
\end{equation}
where $U_0 \in \mathbb{C}$ is given, in addition to parameters $\gamma$, $\Omega$, and $E$.

Let $x = \{ U_n \}_{n \in \mathbb{N}}$, $X = \ell^2(\mathbb{N})$, $y = \epsilon$, $Y = \mathbb{R}$,
and $Z = \ell^2(\mathbb{N})$. Then, we have $F(0,0) = 0$ and the Jacobian
operator $D_x F(0,0)$ is given by identical copies of the matrix
$$
\left[ \begin{matrix} E - i \gamma & -\Omega \\ -\Omega & E + i \gamma \end{matrix} \right],
$$
with the eigenvalues $\lambda_{\pm} := E \pm \sqrt{\Omega^2 - \gamma^2}$. By the assumption of the lemma,
$\lambda_{\pm} \neq 0$, so that the Jacobian operator $D_x F(0,0)$ is one-to-one and onto.
By the implicit function theorem, for every $U_0 \in \mathbb{C}$
and every $\epsilon \neq 0$ sufficiently small, there exists a unique small
solution $U \in \ell^2(\mathbb{N})$ of the system (\ref{fixed-point-1}) such that
\begin{equation}
\| U \|_{l^2(\mathbb{N})} \leq C_1 |\epsilon| |U_0|,
\label{eq:U}
\end{equation}
where the positive constant $C_1$ is independent from $\epsilon$ and $U_0$.

Thanks to the symmetry of the difference equation (\ref{eq:statPT}),
we find that $U_{-n} = U_n$, $n \in \mathbb{N}$ satisfy the same system (\ref{fixed-point-1})
with $-n \in \mathbb{N}$, with the same unique solution.

In the second application of the implicit function theorem, we consider the
following algebraic equation
\begin{equation}
E U_0 = 2 \epsilon \left( \bar{U}_{1} - \bar{U}_0 \right) + \Omega \bar{U}_0 + i \gamma U_0 +
6 |U_0|^2 \bar{U}_0 + 2 U_0^3, \label{fixed-point-2}
\end{equation}
where $U_1 \in \mathbb{C}$ depends on $U_0$, $\gamma$, $\Omega$, and $E$,
satisfies the bound (\ref{eq:U}), and is uniquely defined by the previous result.

Let $x = U_0$, $X = \mathbb{C}$, $y = \epsilon$, $Y = \mathbb{R}$,
and $Z = \mathbb{C}$. Then, we have $F(A e^{i\theta},0) = 0$, where $A$ and $\theta$ are
defined in Lemma \ref{lemma-dimer}. The Jacobian
operator $D_x F(A e^{i \theta},0)$ is given by the matrix
\begin{eqnarray}
\nonumber
& \phantom{t} & \left[ \begin{matrix} E - i \gamma - 6 U_0^2 - 6 \bar{U}_0^2 & -\Omega - 12 |U_0|^2 \\ -\Omega - 12 |U_0|^2 &
E + i \gamma - 6 U_0^2 - 6 \bar{U}_0^2 \end{matrix} \right] \biggr|_{U_0 = A e^{i \theta}} \\
\label{matrix-Jacobian}
& = & \left[ \begin{array}{cc} E - i \gamma - \frac{12 E A^2}{\Omega + 8 A^2} & -\Omega - 12 A^2 \\
-\Omega - 12 A^2 & E + i \gamma - \frac{12 E A^2}{\Omega + 8 A^2} \end{array} \right].
\end{eqnarray}
We show in Lemma \ref{proposition-unbounded} below that
the matrix given by (\ref{matrix-Jacobian}) is invertible under the conditions
$\gamma \neq 0$ and $\Omega \neq - 2 |\gamma|$. By the implicit function theorem,
for every $\epsilon \neq 0$ sufficiently small, there exists a unique solution $U_0 \in \mathbb{C}$
to the algebraic equation (\ref{fixed-point-2}) near $A e^{i \theta}$ such that
\begin{equation}
\left| U_0 - A e^{i \theta} \right| \leq C_2 |\epsilon|,
\label{eq:U0}
\end{equation}
where the positive constant $C_2$ is independent from $\epsilon$.
The bound (\ref{bound-soliton}) holds thanks to the bounds (\ref{eq:U}) and (\ref{eq:U0}).
Since both equations (\ref{fixed-point-1}) and (\ref{fixed-point-2}) are smooth in $\epsilon$,
the solution $U$ is smooth in $\epsilon$.
\end{proof}

In the following result, we show that the matrix given by (\ref{matrix-Jacobian}) is invertible
for every branch of Lemma \ref{lemma-dimer} with an exception of a single point $E = 0$
on branch (c) for $\Omega = -2 |\gamma|$.

\begin{lemma}
\label{proposition-unbounded}
With the exception of the point $E = 0$ on branch (c) of Lemma \ref{lemma-dimer} for $\Omega = -2 |\gamma|$,
the matrix given by (\ref{matrix-Jacobian}) is invertible for every $\gamma \neq 0$.
\end{lemma}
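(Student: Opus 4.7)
The plan is a direct determinant computation followed by a case analysis over the three branches of Lemma~\ref{lemma-dimer}; denote the matrix in~(\ref{matrix-Jacobian}) by $M$. The core work is algebraic, and there is essentially no conceptual content beyond the right choice of variables.

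First I would parametrize $U_0 = A e^{i\theta}$ as in the proof of Lemma~\ref{lemma-dimer}, so that $6 U_0^2 + 6 \bar{U}_0^2 = 12 A^2 \cos(2\theta)$ and $12 |U_0|^2 = 12 A^2$. The diagonal entries of $M$ then become $E \pm i \gamma - 12 A^2 \cos(2\theta)$ and the off-diagonals become $-\Omega - 12 A^2$, giving
$$
\det M = (E - 12 A^2 \cos 2\theta)^2 + \gamma^2 - (\Omega + 12 A^2)^2.
$$
Substituting the parametrization (\ref{parameterization-A}), so that $E = (8A^2 + \Omega) \cos 2\theta$ and $\gamma = (4A^2 + \Omega) \sin 2\theta$, the first squared term reduces to $(\Omega - 4 A^2)^2 \cos^2 2\theta$. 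Using $\sin^2 2\theta = 1 - \cos^2 2\theta$ together with two elementary difference-of-squares identities, $\det M$ collapses to
$$
\det M = -16 A^2 \bigl[\, \Omega (1 + \cos^2 2\theta) + 8 A^2 \,\bigr].
$$

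The hypothesis $E \neq \pm E_0$ in Theorem~\ref{theorem-existence-soliton} (inherited here) excludes $A = 0$ on every branch, so $A > 0$ and $\det M = 0$ iff the bracket vanishes. If $\Omega \geq 0$ the bracket is strictly positive, which handles branch (a) and the $\Omega \geq 0$ portion of branch (b) at once. For $\Omega < 0$ the bracket vanishes precisely when $A^2 = |\Omega|(1 + \cos^2 2\theta)/8 \in [\,|\Omega|/8,|\Omega|/4\,]$. On branch (b) with $\Omega < 0$ one has $A^2 > A_+^2 = (|\gamma| + |\Omega|)/4 > |\Omega|/4$, so the critical window is missed. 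On branch (c) I would split on the two cases defining (\ref{A-minus}): if $|\gamma| < |\Omega| < 2|\gamma|$ then $A_-^2 = (|\Omega| - |\gamma|)/4 < |\Omega|/8$, while if $|\Omega| > 2|\gamma|$ then $A_-^2 = |\Omega|/8$; in either situation the strict inequality $A^2 < A_-^2 \leq |\Omega|/8$ rules out the critical window.

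The only configuration left uncovered is $|\Omega| = 2|\gamma|$, i.e.\ $\Omega = -2|\gamma|$, where $A_-^2 = |\Omega|/8$ coincides with the lower edge of the critical window and corresponds to the limit $E \to 0$ on branch (c); this is exactly the single exceptional point permitted by the statement. The one genuinely delicate step is the algebraic reduction yielding the clean factored form of $\det M$; once that identity is in hand, the branch-by-branch verification is routine bookkeeping against the explicit formulas (\ref{A-plus}) and (\ref{A-minus}).
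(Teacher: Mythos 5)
Your determinant computation is correct: with the parametrization (\ref{parameterization-A}) one indeed finds $\det M = -16A^2\bigl[\Omega(1+\cos^2 2\theta)+8A^2\bigr]$, and eliminating $\theta$ via $\sin^2 2\theta = \gamma^2/(4A^2+\Omega)^2$ shows this vanishing condition is exactly the paper's constraint (\ref{constraint-parameters}), $2(\Omega+4A^2)^3=\Omega\gamma^2$. Your branch-by-branch comparison of $A^2$ against $A_+^2$ and $A_-^2$ is the same strategy as the paper's proof, so the route is essentially identical, only written in a different algebraic normal form.

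There is, however, a genuine gap at the endpoint of branch (c) when $|\Omega|>2|\gamma|$. Lemma \ref{lemma-dimer}(c) includes $E=0$ on the branch, and for $|\Omega|>2|\gamma|$ that point has $A^2=A_-^2=|\Omega|/8$ exactly (there $E=0$ because the prefactor $8A^2+\Omega$ vanishes, not because $\cos 2\theta$ does). Your claimed strict inequality $A^2<A_-^2\leq|\Omega|/8$ therefore fails at this point, and since $|\Omega|/8$ is precisely the lower edge of your critical window $[\,|\Omega|/8,|\Omega|/4\,]$, the window argument — which is only a necessary condition leaving $\cos^2 2\theta$ undetermined — does not rule out a zero of the bracket there. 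This matters because the statement of the lemma excepts only the single point $E=0$ with $\Omega=-2|\gamma|$, i.e.\ it asserts invertibility at $E=0$ on branch (c) for every other $\Omega<-|\gamma|$, and Theorem \ref{theorem-existence-soliton} uses exactly that. The fix is one line: at $A^2=|\Omega|/8$ one has $\sin 2\theta=\gamma/(4A^2+\Omega)=-2\gamma/|\Omega|$, hence $\cos^2 2\theta = 1-4\gamma^2/\Omega^2>0$ for $|\Omega|>2|\gamma|$, and the bracket equals $\Omega\cos^2 2\theta\neq 0$; it vanishes only when $|\Omega|=2|\gamma|$, which recovers the excepted point. (The paper sidesteps this by estimating $2(\Omega+4A^2)^3\leq 2(4A_-^2-|\Omega|)^3=-|\Omega|^3/4<\Omega\gamma^2$ on the closed range $A^2\leq A_-^2$.) The rest of your case analysis — branch (a), the $\Omega\geq 0$ part of (b), branch (b) with $\Omega<0$ via $A_+^2>|\Omega|/4$, and branch (c) with $|\gamma|<|\Omega|<2|\gamma|$ via $A_-^2<|\Omega|/8$ — is sound even at the $E=0$ endpoints.
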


\begin{proof}
The matrix given by (\ref{matrix-Jacobian}) has zero eigenvalue if and only if
its determinant is zero, which happens at
$$
\frac{E^2 (\Omega - 4 A^2)^2}{(\Omega + 8 A^2)^2} + \gamma^2 - (\Omega + 12 A^2)^2 = 0.
$$
Eliminating $E^2$ by using parametrization (\ref{branch-E-A}) and simplifying the algebraic equation
for nonzero $A^2$, we reduce it to the form
\begin{equation}
2(\Omega + 4 A^2)^3 = \Omega \gamma^2.
\label{constraint-parameters}
\end{equation}
We now check if this constraint can be satisfied for the three branches of Lemma \ref{lemma-dimer}.
\begin{itemize}
\item[(a)] If $\Omega > |\gamma|$, the constraint (\ref{constraint-parameters}) is not satisfied because
the left-hand side
$$
2(\Omega + 4 A^2)^3 \geq 2\Omega^3 > 2 \Omega \gamma^2
$$
exceeds the right-hand side $\Omega \gamma^2$.

\item[(b)] If $\Omega < |\gamma|$ and $A^2 \geq A_+^2$,
where $A_+^2$ is given by (\ref{A-plus}), the constraint (\ref{constraint-parameters}) is not satisfied
because the left-hand side
$$
2(\Omega + 4 A^2)^3 \geq 2(\Omega + 4 A_+^2)^3 = 2 |\gamma|^3
$$
exceeds the left-hand side $\Omega \gamma^2$ both for $\Omega \in [0,|\gamma|)$ and for $\Omega < 0$.

\item[(c)] If $\Omega < -|\gamma|$ and $A^2 \leq A_-^2$, where $A_-^2$ is given by (\ref{A-minus}),
the constraint (\ref{constraint-parameters}) is not satisfied because the left-hand side is estimated by
$$
2(4 A^2 + \Omega)^3  \leq 2(4 A_-^2 - |\Omega|)^3 = \min\{ -2 |\gamma|^3, -|\Omega|^3/4\}.
$$
In the first case, we have $|\Omega| \in (|\gamma|,2 |\gamma|)$, so that the left-hand side is strictly smaller than
$-|\Omega| \gamma^2$. In the second case, we have $|\Omega| > 2 |\gamma|$, so that
the left-hand side is also strictly smaller
than $-|\Omega| \gamma^2$. Only if $|\Omega| = 2 |\gamma|$, the constraint (\ref{constraint-parameters})
is satisfied at $E = 0$, when $A^2 = A_-^2$ and
$$
2(4 A^2 + \Omega)^3 = -2 |\gamma|^3 = -|\Omega| \gamma^2 = \Omega \gamma^2.
$$
\end{itemize}
Hence, the matrix (\ref{matrix-Jacobian}) is invertible for all parameter values with one exceptional case.
\end{proof}

\begin{remark}
In the asymptotic limit $E^2 = 64 A^4 + \mathcal{O}(A^2)$ as $A \to \infty$, see Lemma \ref{lemma-dimer},
the matrix (\ref{matrix-Jacobian}) is expanded asymptotically as
\begin{eqnarray}
-\frac{1}{2} \left[ \begin{array}{cc} E & 3 |E| \\
3 |E| & E \end{array} \right] + \mathcal{O}(1) \quad \mbox{\rm as} \quad |E| \to \infty,
\label{matrix-Jacobian-asymptotics}
\end{eqnarray}
with the two eigenvalues $\lambda_1 = E$ and $\lambda_2 = -2E$. Thus, the matrix given by (\ref{matrix-Jacobian-asymptotics})
is invertible for every branch extending to sufficiently large values of $E$.
\end{remark}

\section{Stability of zero equilibrium}
\label{zero-equilibrium}

Here we discuss the linear stability of the zero equilibrium in the $\PT$-symmetric dNLS equation (\ref{PT-dNLS}). 
The following proposition yields a simple result. 

\begin{proposition}
The zero equilibrium of the $\PT$-symmetric dNLS equation (\ref{PT-dNLS})
is linearly stable if $|\gamma| < \gamma_0$, where
\begin{equation}
\label{gamma-0}
\gamma_0 := \left\{ \begin{array}{ll} \Omega - 4 \epsilon, & \Omega > 0, \\
|\Omega|, & \Omega < 0. \end{array} \right.
\end{equation}
The zero equilibrium is linearly unstable if $|\gamma| > \gamma_0$. 
\label{prop-zero-equilibrium}
\end{proposition}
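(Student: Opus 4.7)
The plan is to diagonalize the linearization of (\ref{PT-dNLS}) at the origin via the discrete Fourier transform, reducing the infinite-dimensional spectral problem to a one-parameter family of $2\times 2$ matrix problems indexed by the dual variable $\theta \in [-\pi,\pi]$. Dropping the cubic terms from (\ref{PT-dNLS}) leaves a linear constant-coefficient system on $\mathbb{Z}$. Substituting the ansatz $u_n = U(\theta) e^{-i\Lambda t + in\theta}$, $v_n = V(\theta) e^{-i\Lambda t + in\theta}$, the discrete Laplacian becomes multiplication by $-4\sin^2(\theta/2)$, and I obtain the algebraic system
\begin{equation*}
2\Lambda \begin{pmatrix} U \\ V \end{pmatrix} = \begin{pmatrix} i\gamma & W(\theta) \\ W(\theta) & -i\gamma \end{pmatrix} \begin{pmatrix} U \\ V \end{pmatrix}, \qquad W(\theta) := \Omega - 4\epsilon \sin^2(\theta/2).
\end{equation*}
Computing the determinant gives the dispersion relation $4\Lambda^2 = W(\theta)^2 - \gamma^2$.

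Linear stability of the zero equilibrium is equivalent to $\Lambda \in \mathbb{R}$ for every $\theta \in [-\pi,\pi]$, which translates directly into the uniform bound $W(\theta)^2 \geq \gamma^2$. Thus the threshold is $\gamma_0 = \min_{\theta \in [-\pi,\pi]} |W(\theta)|$, and I would simply minimize the explicit trigonometric expression in the two regimes. If $\Omega > 0$ (and the implicit assumption $\Omega > 4\epsilon$ holds so $W$ does not change sign on the Brillouin zone), then $W(\theta) \in [\Omega - 4\epsilon, \Omega]$ with both endpoints positive, so $|W|$ attains its minimum $\Omega - 4\epsilon$ at $\theta = \pm \pi$. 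If $\Omega < 0$, then $W(\theta) = -|\Omega| - 4\epsilon \sin^2(\theta/2) < 0$ for all $\theta$, so $|W(\theta)| \geq |\Omega|$ with equality at $\theta = 0$. This yields the two cases of (\ref{gamma-0}).

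For the converse (linear instability when $|\gamma| > \gamma_0$), I would exhibit a growing mode: by continuity of $\theta \mapsto W(\theta)$, the set where $W(\theta)^2 < \gamma^2$ is an open subinterval of $[-\pi,\pi]$, on which $4\Lambda^2 = W(\theta)^2 - \gamma^2 < 0$ produces a purely imaginary pair $\Lambda = \pm i \tfrac{1}{2}\sqrt{\gamma^2 - W(\theta)^2}$ and hence a $\Lambda$ with strictly positive imaginary part. Packing nearby Fourier modes from this interval into an $\ell^2(\mathbb{Z})$ wave packet yields a genuine solution of the linearized system whose $\ell^2$-norm grows exponentially in time.

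There is no serious obstacle here; the argument is a textbook Fourier computation. The only minor care points are the sign-sign book-keeping for $W(\theta)$ when $\Omega$ changes sign (needed to identify which endpoint $\theta = 0$ or $\theta = \pm\pi$ realizes $\min |W|$), and the remark above that non-real spectrum for a non-empty open subset of $\theta$ values genuinely translates into exponential growth of solutions in $\ell^2(\mathbb{Z})$, not merely pointwise growth of a non-normalizable Bloch mode.
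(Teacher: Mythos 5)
Your proposal is correct and takes essentially the same route as the paper: a discrete Fourier (Bloch) reduction of the linearized system to a $2\times 2$ problem with dispersion relation $4\Lambda^2 = \left(\Omega - 4\epsilon\sin^2(\theta/2)\right)^2 - \gamma^2$, with the threshold $\gamma_0$ obtained by minimizing $|\Omega - 4\epsilon\sin^2(\theta/2)|$ over $\theta \in [-\pi,\pi]$ and instability coming from the modes near $\theta = \pm\pi$ ($\Omega>0$) or $\theta = 0$ ($\Omega<0$). Your added care points (the implicit assumption $\Omega > 4\epsilon$ in the first case and the $\ell^2$ wave-packet argument turning non-real Bloch frequencies into genuine exponential growth) merely make explicit what the paper leaves tacit.
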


\begin{proof}
Truncating the $\PT$-symmetric dNLS equation (\ref{PT-dNLS}) at the linear terms
and using the Fourier transform
\begin{equation}
\label{discrete-Fourier}
u_n(t) = \frac{1}{2\pi} \int_{-\pi}^{\pi} \hat{U}(k) e^{i k n + i \omega(k) t} dk,
\end{equation}
we obtain the linear homogeneous system
$$
\hat{D}(k) \left[ \begin{array}{c} \hat{U}(k) \\ \hat{V}(k) \end{array} \right] = \left[ \begin{array}{c} 0 \\ 0 \end{array} \right],
\quad \mbox{\rm where} \quad
\hat{D}(k) := \left[ \begin{matrix} -2 \omega(k) - i \gamma & -\Omega + 4 \epsilon \sin^2(k/2) \\
-\Omega + 4 \epsilon \sin^2(k/2) & -2 \omega(k) + i \gamma \end{matrix} \right].
$$
The determinant of $\hat{D}(k)$ is zero if and only if $\omega(k)$ is found from the quadratic equation
\begin{equation}
\label{dispersion-relation}
4 \omega^2(k) + \gamma^2 - \left( \Omega - 4 \epsilon \sin^2\frac{k}{2} \right)^2 = 0.
\end{equation}
For any $|\gamma| < \gamma_0$, where $\gamma_0$ is given by (\ref{gamma-0}),
the two branches $\pm \omega(k)$ found from the quadratic equation (\ref{dispersion-relation})
are real-valued and non-degenerate for every $k \in [-\pi,\pi]$.
Therefore, the zero equilibrium is linearly stable.

On the other hand, for any $|\gamma| > \gamma_0$, the values of $\omega(k)$ are purely imaginary either near
$k = \pm \pi$ if $\Omega > 0$ or near $k = 0$ if $\Omega < 0$. Therefore, the zero equilibrium is linearly unstable.
\end{proof}

\begin{remark}
The value $\gamma_0$ given by (\ref{gamma-0}) represents the phase transition threshold
and the $\PT$-symmetric dNLS equation (\ref{PT-dNLS}) is said to have broken
$\PT$-symmetry for $|\gamma| > \gamma_0$.
\end{remark}

If $\epsilon = 0$, the zero equilibrium is only linearly stable for $|\gamma| < |\Omega|$.
Since the localized breathers cannot be stable when the zero background is unstable, we shall study
stability of breathers only for the case when $|\gamma| < |\Omega|$, that is,
in the regime of unbroken $\PT$-symmetry.

\section{Variational characterization of breathers}
\label{variational}

It follows from Theorem \ref{theorem-existence-soliton} that
each interior point on the solution branches shown on Figure \ref{branches}
generates a fundamental breather of the $\PT$-symmetric dNLS equation (\ref{PT-dNLS}).
We shall now characterize these breathers as relative equilibria of the energy function.

Thanks to the cross-gradient symplectic structure (\ref{Hamiltonian-PT-dNLS}),
the stationary $\PT$-symmetric dNLS equation \eqref{eq:statE} can be written in the gradient form
\begin{equation}
E U_n = \frac{\partial H_{u,v}}{\partial \bar{V}_n}, \quad
E V_n = \frac{\partial H_{u,v}}{\partial \bar{U}_n}, \quad n \in \mathbb{Z}.
\label{Hamiltonian-PT-dNLS-var}
\end{equation}
Keeping in mind the additional conserved quantity $Q_{u,v}$ given by
(\ref{charge-PT-dNLS}), we conclude that the stationary solution $(U,V)$
is a critical point of the combined energy function given by
\begin{equation}
\label{combined-energy-functional}
H_E := H_{u,v} - E Q_{u,v}.
\end{equation}

If we want to apply the Lyapunov method in order to study nonlinear stability of stationary solutions
in Hamiltonian systems, we shall investigate convexity of the second variation
of the combined energy functional $H_E$ at $(U,V)$. Using the expansion
$u = U + {\bf u}$, $v = V + {\bf v}$ and introducing extended variables $\Phi$ and $\phi$ with the blocks
\begin{equation}
\label{notation-pert}
\Phi_n := (U_n,\bar{U}_n,V_n,\bar{V}_n), \quad
\phi_n := ({\bf u}_n,\bar{\bf u}_n,{\bf v}_n,\bar{\bf v}_n),
\end{equation}
we can expand the smooth function $H_E$ up to the quadratic terms in $\phi$:
\begin{equation}
\label{quadratic-expansion}
H_E(\Phi+\phi) = H_E(\Phi) + \frac{1}{2} \left\langle \mathcal{H}''_E \phi, \phi \right\rangle_{\ltz} +
\mathcal{O}(\|\phi\|^3_{\ltz}),
\end{equation}
where $\mathcal{H}''_E$ is the self-adjoint (Hessian) operator defined on $\ell^2(\mathbb{Z})$ and
the scalar product was used in the following form:
$$
\langle x,y \rangle_{\ltz} = \sum_{k \in \mathbb{Z}} x_k \bar{y}_k.
$$

Using (\ref{energy-PT-dNLS}) and (\ref{charge-PT-dNLS}), the Hessian operator can be computed explicitly as follows
\begin{eqnarray}
\label{PT-dNLS-var}
\mathcal{H}''_E = \mathcal{L} + \epsilon \Delta,
\end{eqnarray}
where blocks of $\mathcal{L}$ at each lattice node $n \in \mathbb{Z}$ are given by
{\small \begin{eqnarray*}
\arraycolsep=0pt\def\arraystretch{1}
\mathcal{L}_n = \left[\begin{array}{cccc} \Omega + 8 |U_n|^2 & 2(U_n^2 + \bar{U}_n^2) & -E - i \gamma + 4(U_n^2 + \bar{U}_n^2) & 4 |U_n|^2 \\
2(U_n^2 + \bar{U}_n^2) & \Omega + 8 |U_n|^2 & 4 |U_n|^2  & -E + i \gamma + 4(U_n^2 + \bar{U}_n^2) \\
-E + i \gamma + 4(U_n^2 + \bar{U}_n^2) & 4 |U_n|^2 & \Omega + 8 |U_n|^2 & 2(U_n^2 + \bar{U}_n^2)\\
4 |U_n|^2 & -E - i \gamma + 4(U_n^2 + \bar{U}_n^2) & 2(U_n^2 + \bar{U}_n^2) & \Omega + 8 |U_n|^2 \end{array} \right]
\end{eqnarray*}}and $\Delta$ is the discrete Laplacian operator applied to blocks of $\phi$ at each lattice node $n \in \mathbb{Z}$:
$$
(\Delta \phi)_n = \phi_{n+1} - 2 \phi_n + \phi_{n-1}.
$$
In the expression for $\mathcal{L}_n$, we have used the $\PT$-symmetry condition $V = \bar{U}$ for
the given stationary solution $(U,V)$.

We study convexity of the combined energy functional $H_E$ at $(U,V)$. Since the zero equilibrium
is linearly stable only for $|\gamma| < |\Omega|$ (if $\epsilon = 0$),
we only consider breathers of Theorem \ref{theorem-existence-soliton} for $|\gamma| < |\Omega|$.
In order to study eigenvalues of $\mathcal{H}''_E$ for small values of $\epsilon$, we
use the following perturbation theory. \\

{\bf Perturbation Theory for Linear Operators} (Theorem VII.1.7 in \cite{Kato}). {\em
Let $T(\epsilon)$ be a family of bounded operators from Banach space $X$ to itself,
which depends analytically on the small parameter $\epsilon$. If the spectrum of $T(0)$
is separated into two parts, the subspaces of $X$ corresponding to the separated parts also depend
analytically on $\epsilon$. In particular, the spectrum of $T(\epsilon)$ is separated into two parts for any $\epsilon \neq 0$
sufficiently small. }

\vspace{0.25cm}

With an application of the perturbation theory for linear operators, we prove the following main result of this section.

\begin{theorem}
\label{theorem-critical-point}
Fix $\gamma \neq 0$, $\Omega$, and $E$ along branches of the $\PT$-symmetric breathers $(U,V)$ given by
Theorem  \ref{theorem-existence-soliton} such that $|\Omega| > |\gamma|$ and $E \neq \pm E_0$,
where $E_0 := \sqrt{\Omega^2 - \gamma^2} > 0$. For every $\epsilon > 0$ sufficiently small,
the operator $\mathcal{H}''_E$ admits a one-dimensional kernel in $\ell^2(\mathbb{Z})$ spanned
by the eigenvector $\sigma \Phi$ due to the gauge invariance, where the blocks of the eigenvector are given by
\begin{equation}
\label{kernel}
(\sigma \Phi)_n := (U_n, - \bar{U}_n, V_n, -\bar{V}_n).
\end{equation}
In addition,
\begin{itemize}
\item If $|E| > E_0$, the spectrum of $\mathcal{H}''_E$ in $\ell^2(\mathbb{Z})$
includes infinite-dimensional positive and negative parts.
\item If $|E| < E_0$ and $\Omega < -|\gamma|$, the spectrum of $\mathcal{H}\emph{}''_E$ in $\ell^2(\mathbb{Z})$
includes an infinite-dimensional negative part and either three or one simple positive eigenvalues
for branches (b) and (c) of Lemma \ref{lemma-dimer} respectively.
\end{itemize}
\end{theorem}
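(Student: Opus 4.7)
The strategy is to treat $\mathcal{H}''_E = \mathcal{L} + \epsilon \Delta$ as an analytic perturbation in $\epsilon$ of the block-diagonal operator $\mathcal{L}$, and to invoke the perturbation theorem quoted just above the statement. At $\epsilon = 0$, the bound (\ref{bound-soliton}) of Theorem \ref{theorem-existence-soliton} forces $U_n = 0$ for $n \neq 0$ and $U_0 = A e^{i\theta}$, so $\mathcal{L}$ decouples into a distinguished $4 \times 4$ central block $\mathcal{L}_0$ at $n=0$ together with identical off-site blocks $\mathcal{L}^{(\infty)}$ repeated at every $n \neq 0$. I will analyze the spectrum of each piece separately and then argue that the two parts remain separated when $\epsilon > 0$ is switched on, so that the quoted perturbation theorem produces the full spectral picture.

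First I treat the bulk operator $\mathcal{L}^{(\infty)} + \epsilon \Delta$. Setting $U_n = 0$ in the explicit formula for $\mathcal{L}_n$ leaves only the diagonal entry $\Omega$ and the $-E \mp i\gamma$ couplings between indices $(1,3)$ and $(2,4)$; the permutation $(1,2,3,4) \mapsto (1,3,2,4)$ block-diagonalizes $\mathcal{L}^{(\infty)}$ into two identical $2 \times 2$ blocks with eigenvalues $\Omega \pm \sqrt{E^2 + \gamma^2}$. Applying the Fourier transform (\ref{discrete-Fourier}), for which $\Delta$ has symbol $-4\sin^2(k/2) I_4$, Weyl's theorem identifies the essential spectrum of $\mathcal{H}''_E$ with the two bands
\begin{equation*}
\bigl[\Omega - 4\epsilon - \sqrt{E^2+\gamma^2},\ \Omega - \sqrt{E^2+\gamma^2}\bigr] \cup \bigl[\Omega - 4\epsilon + \sqrt{E^2+\gamma^2},\ \Omega + \sqrt{E^2+\gamma^2}\bigr].
\end{equation*}
When $|E| > E_0$ one has $\sqrt{E^2+\gamma^2} > |\Omega|$, so for $\epsilon$ small these two bands straddle zero, giving the infinite-dimensional positive and negative parts of the first claim. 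When $|E| < E_0$ and $\Omega < -|\gamma|$ one has $\sqrt{E^2+\gamma^2} < |\Omega| = -\Omega$, so both bands lie strictly in $(-\infty,0)$ for $\epsilon$ small, giving the infinite-dimensional negative part of the second claim.

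Next I handle the kernel. Because $H_{u,v}$ and $Q_{u,v}$ are both invariant under the gauge action $(u,v)\mapsto (e^{i\alpha}u, e^{i\alpha}v)$, differentiating the critical-point identity for $H_E$ along the gauge orbit at $\alpha = 0$ shows that $\sigma\Phi$ from (\ref{kernel}) lies in $\ker \mathcal{H}''_E$ for every $\epsilon$. For the reverse inclusion, note that at $\epsilon = 0$ any kernel vector must be supported at $n = 0$, since under the assumption $E \neq \pm E_0$ the eigenvalues $\Omega \pm \sqrt{E^2+\gamma^2}$ of $\mathcal{L}^{(\infty)}$ are nonzero. Substituting $U_0 = Ae^{i\theta}$ together with $U_0^2 + \bar U_0^2 = 2 A^2 E/(8A^2+\Omega)$ from (\ref{parameterization-A}) into $\mathcal{L}_0$ and expanding, one checks that $\det \mathcal{L}_0$ is, up to a nonvanishing prefactor, exactly the expression $2(\Omega + 4A^2)^3 - \Omega \gamma^2$ whose vanishing was analyzed in Lemma \ref{proposition-unbounded}; under the hypothesis $\Omega \neq -2|\gamma|$, that lemma rules out accidental zero eigenvalues of $\mathcal{L}_0$ beyond the gauge one. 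Consequently $\ker \mathcal{L}_0$ is exactly one-dimensional, the other three eigenvalues of $\mathcal{L}_0$ are bounded away from $0$ and from $\Omega \pm \sqrt{E^2+\gamma^2}$, and the perturbation theorem yields persistence of the simple zero eigenvalue and of the three isolated nonzero eigenvalues for all sufficiently small $\epsilon > 0$.

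It remains to count signs of these three nontrivial eigenvalues of $\mathcal{L}_0$. The $\PT$-involution $({\bf u}, \bar{\bf u}, {\bf v}, \bar{\bf v}) \mapsto (\bar{\bf v}, {\bf v}, \bar{\bf u}, {\bf u})$ commutes with $\mathcal{L}_0$ because the profile satisfies $V = \bar U$, so $\mathcal{L}_0$ splits into two $2 \times 2$ symmetric blocks on the $\pm 1$ eigenspaces of this involution. The gauge direction lies in one block and forces its determinant to vanish; the trace and determinant of the other block, and the nonzero eigenvalue of the gauge block, are explicit rational functions of $A^2,\Omega,\gamma,E$ that I can simplify using (\ref{branch-E-A}). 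On branch (b) the inequalities $4A^2 + \Omega \geq |\gamma| > 0$ force all three nonzero eigenvalues to be positive, while on branch (c) the inequalities $4A^2 + \Omega \leq -|\gamma| < 0$ together with the upper bound $8A^2 \leq |\Omega|$ used in Lemma \ref{lemma-dimer}(c) produce exactly one positive and two negative eigenvalues. The principal obstacle is precisely this sign-counting step: one must carefully track which of the branch-specific inequalities on $\Omega + 4A^2$ bind and use (\ref{branch-E-A}) to eliminate $E^2$ at the right moments; the rest of the argument reduces to the Fourier analysis of the bulk operator and a direct application of the analytic perturbation theorem.
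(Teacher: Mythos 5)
Your proposal follows essentially the same route as the paper: decouple $\mathcal{H}''_E=\mathcal{L}+\epsilon\Delta$ at the anti-continuum limit $\epsilon=0$, analyze the central $4\times 4$ block $\mathcal{L}_0$ and the identical off-site blocks, obtain the kernel from gauge invariance, rule out extra zero eigenvalues of $\mathcal{L}_0$ via the constraint (\ref{constraint-parameters}) of Lemma \ref{proposition-unbounded}, and transfer the picture to small $\epsilon>0$ by analytic perturbation theory. Your treatment of the bulk via the Fourier symbol and Weyl's theorem (bands $[\Omega-4\epsilon\pm\sqrt{E^2+\gamma^2},\,\Omega\pm\sqrt{E^2+\gamma^2}]$) is a clean variant of the paper's Kato-type continuation of the infinite-multiplicity points $\mu_\pm=\Omega\pm\sqrt{E^2+\gamma^2}$, and it delivers the two bullets' statements about the infinite-dimensional parts just as well.

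Two local issues. First, the sentence identifying $\det\mathcal{L}_0$ with $2(\Omega+4A^2)^3-\Omega\gamma^2$ up to a nonvanishing factor cannot be right as written: the gauge vector is supported at $n=0$ when $\epsilon=0$, so $\det\mathcal{L}_0=0$ identically. The correct statement (and what the paper proves) is that $\mathcal{L}_0$ has the simple zero eigenvalue plus three eigenvalues $\mu_1=2(4A^2+\Omega)$ and $\mu_{2,3}=12A^2+\Omega\pm\sqrt{(4A^2-\Omega)^2+16\Omega A^2\gamma^2/(4A^2+\Omega)^2}$, and that $\mu_2\mu_3=0$ reduces, after eliminating $E^2$ via (\ref{branch-E-A}), to the constraint (\ref{constraint-parameters}); your conclusion (kernel exactly one-dimensional away from the exceptional point) is nevertheless the right one. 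Second, the sign count of the three nonzero eigenvalues — which is precisely the content of the "three or one positive eigenvalues" claim — is asserted rather than carried out: you propose the persymmetric splitting of $\mathcal{L}_0$ into two $2\times 2$ blocks and state that the branch inequalities $4A^2+\Omega\geq|\gamma|$ (branch (b)) and $4A^2+\Omega\leq-|\gamma|$, $8A^2\leq|\Omega|$ (branch (c)) force the outcome, but you do not verify it. Your claimed outcome does match the paper (all of $\mu_{1,2,3}>0$ on (b); $\mu_1,\mu_3<0$, $\mu_2>0$ on (c)), and your block-reduction would work (e.g.\ on (b), $\mu_1>0$ is immediate and $\mu_3>8A^2+2\Omega\geq 2|\gamma|>0$ since $\Omega<0$ makes the radicand smaller than $(4A^2-\Omega)^2$), but as written this crucial step is a sketch. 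The paper instead fixes the signs by evaluating the explicit formulas in the limits $A\to\infty$ (branches (a),(b)) and $A\to 0$ (branch (c)) and then invoking the absence of zero crossings guaranteed by Lemma \ref{proposition-unbounded}; you should either execute your $2\times 2$ trace/determinant computation or adopt that continuity-plus-asymptotics argument to close the gap.
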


\begin{proof}
If $\epsilon = 0$, the breather solution of Theorem \ref{theorem-existence-soliton}
is given by $U_n = 0$ for every $n \neq 0$ and $U_0 = A e^{i \theta}$,
where $A$ and $\theta$ are defined by Lemma \ref{lemma-dimer}.
In this case, the linear operator $\mathcal{H}''_E = \mathcal{L}$ decouples
into $4$-by-$4$ blocks for each lattice node $n \in \mathbb{Z}$.

For $n = 0$, the $4$-by-$4$ block of the linear operator $\mathcal{L}$
is given by
{\small \begin{equation*} \label{block-L-0}
\arraycolsep=0pt\def\arraystretch{1}
\mathcal{L}_0 = \left[\begin{array}{cccc} \Omega + 8 A^2 & 4 A^2 \cos(2 \theta) & -E - i \gamma + 8 A^2 \cos(2 \theta) & 4 A^2 \\
4 A^2 \cos(2 \theta) & \Omega + 8 A^2 & 4 A^2  & -E + i \gamma + 8 A^2 \cos(2 \theta) \\
-E + i \gamma + 8 A^2 \cos(2 \theta) & 4 A^2 & \Omega + 8 A^2 & 4 A^2 \cos(2 \theta) \\
4 A^2 & -E - i \gamma + 8 A^2 \cos(2 \theta) & 4 A^2 \cos(2 \theta) & \Omega + 8 A^2 \end{array} \right].
\end{equation*}}Using relations (\ref{parameterization-A}) and (\ref{branch-E-A}), as well as
symbolic computations with MAPLE, we found that the $4$-by-$4$ matrix block $\mathcal{L}_0$ admits
a simple zero eigenvalue and three nonzero eigenvalues $\mu_1$, $\mu_2$, and $\mu_3$ given by
\begin{eqnarray}
\label{eigenvalue-1}
\mu_1 & = & 2 (4A^2 + \Omega), \\
\label{eigenvalue-2-3}
\mu_{2,3} & = & 12 A^2 + \Omega \pm \sqrt{(4A^2 - \Omega)^2 + \frac{16 \Omega A^2 \gamma^2}{(4A^2 + \Omega)^2}}.
\end{eqnarray}
For each branch of Lemma \ref{lemma-dimer} with $\gamma \neq 0$ and $E \neq \pm E_0$, we have $4A^2 + \Omega \neq 0$,
so that $\mu_1 \neq 0$. Furthermore, either $\mu_2 = 0$ or $\mu_3 = 0$ if and only if
$$
(12 A^2 + \Omega)^2 (4 A^2 + \Omega)^2 = (16 A^4 - \Omega^2)^2 + 16 \Omega \gamma^2 A^2.
$$
Expanding this equation for nonzero $A$ yields constraint (\ref{constraint-parameters}).
With the exception of a single point $E = 0$ at $\Omega = -2 |\gamma|$,
we showed in Lemma \ref{proposition-unbounded} that the constraint (\ref{constraint-parameters})
does not hold for any of the branches of Lemma \ref{lemma-dimer}. Therefore, $\mu_2 \neq 0$
and $\mu_3 \neq 0$ along each branch of Lemma \ref{lemma-dimer}
and the signs of $\mu_1$, $\mu_2$, and $\mu_3$ for each branch of Lemma \ref{lemma-dimer}
can be obtained in the limit $A \to \infty$ for branches (a) and (b) or $A \to 0$ for branch (c).
By means of these asymptotic computations as $A \to \infty$ or $A \to 0$, we obtain the following
results for the three branches shown on Figure \ref{branches}:
\begin{itemize}
\item[(a)] $\mu_1, \mu_2, \mu_3 > 0$.

\item[(b)] $\mu_1, \mu_2, \mu_3 > 0$.

\item[(c)] $\mu_1 < 0$, $\mu_2 > 0$, and $\mu_3 < 0$.
\end{itemize}

For $n \in \mathbb{Z} \backslash \{0\}$, the $4$-by-$4$ block of the linear operator $\mathcal{L}$
is given by
\begin{equation}
\label{block-L-n}
\mathcal{L}_n = \left[\begin{array}{cccc} \Omega & 0 & -E - i \gamma  & 0 \\
0 & \Omega & 0  & -E + i \gamma  \\
-E + i \gamma  & 0 & \Omega  & 0 \\
0 & -E - i \gamma & 0 & \Omega \end{array} \right].
\end{equation}
Each block has two double eigenvalues $\mu_+$ and $\mu_-$ given by
$$
\mu_+ = \Omega + \sqrt{E^2 + \gamma^2}, \quad \mu_- = \Omega - \sqrt{E^2 + \gamma^2}.
$$
Since there are infinitely many nodes with $n \neq 0$, the points $\mu_+$ and $\mu_-$ have infinite multiplicity
in the spectrum of the linear operator $\mathcal{L}$. Furthermore, we can sort up the signs of $\mu_+$ and $\mu_-$
for each point on the three branches shown on Figure \ref{branches}:
\begin{itemize}
\item[(1),(3)] If $|E| > E_0 := \sqrt{\Omega^2 - \gamma^2}$, then $\mu_+ > 0$ and $\mu_- < 0$.
\item[(2),(4)] If $|E| < E_0$ and $\Omega < -|\gamma|$, then $\mu_+, \mu_- < 0$.
\end{itemize}

By using the perturbation theory for linear operators, we argue as follows:
\begin{itemize}
\item Since $\mathcal{H}''_E$ is Hermitian on $\ell^2(\mathbb{Z})$, its spectrum is a subset of the real line
for every $\epsilon \neq 0$.

\item The zero eigenvalue persists with respect to $\epsilon \neq 0$ at zero
because the eigenvector (\ref{kernel}) belongs to the kernel of $\mathcal{H}''_E$
due to the gauge invariance for every $\epsilon \neq 0$.

\item The other eigenvalues of $\mathcal{L}$ are isolated away from zero. The spectrum of $\mathcal{H}''_E$ is continuous
with respect to $\epsilon$ and includes infinite-dimensional parts near points $\mu_+$ and $\mu_-$
for small $\epsilon > 0$ (which may include continuous spectrum and isolated eigenvalues) as well
as simple eigenvalues near $\mu_{1,2,3}$ (if $\mu_{1,2,3}$ are different from $\mu_{\pm}$).
\end{itemize}

The statement of the theorem follows from the perturbation theory and
the count of signs of $\mu_{1,2,3}$ and $\mu_{\pm}$ above.
\end{proof}

\begin{remark}
In the asymptotic limit $E^2 = 64 A^4 + \mathcal{O}(A^2)$ as $A \to \infty$,
we can sort out eigenvalues of $\mathcal{H}''_E$ asymptotically as:
\begin{eqnarray}
\mu_1 \approx |E|, \quad \mu_2 \approx 2 |E|, \quad \mu_3 \approx |E|, \quad \mu_+ \approx |E|, \quad \mu_- \approx -|E|,
\end{eqnarray}
where the remainder terms are $\mathcal{O}(1)$ as $|E| \to \infty$. The values
$\mu_1$, $\mu_3$, and $\mu_+$ are close to each other as $|E| \to \infty$.
\end{remark}

\begin{remark}
It follows from Theorem \ref{theorem-critical-point} that for $|E| > E_0$,
the breather $(U,V)$ is a saddle point of the energy functional $H_E$
with infinite-dimensional positive and negative invariant subspaces of
the Hessian operator $\mathcal{H}''_E$. This is very similar to the Hamiltonian
systems of the Dirac type, where stationary states are located in the gap
between the positive and negative continuous spectrum. This property holds 
for points 1 and 3 on branches (a) and (b) shown on Figure \ref{branches}.
\end{remark}

\begin{remark}
No branches other than $|E| > E_0$ exist for $\Omega > |\gamma|$.
On the other hand, points 2 and 4 on branches (b) and (c) shown on 
Figure \ref{branches} satisfy $|E| < E_0$ and $\Omega < -|\gamma|$. The breather
$(U,V)$ is a saddle point of $H_E$ for these points 
and it only has three (one) directions of positive energy in space $\ell^2(\mathbb{Z})$ for
point 2 (point 4).
\end{remark}

\section{Spectral and orbital stability of breathers}
\label{stability}

Spectral stability of breathers can be studied for small values
of coupling constant $\epsilon$ by using the perturbation theory \cite{pel3}. First, we linearize
the $\PT$-symmetric dNLS equation (\ref{PT-dNLS}) at the breather (\ref{stationary})
by using the expansion
$$
u(t) = e^{-\frac{1}{2} i Et} \left[ U + {\bf u}(t) \right], \quad
v(t) = e^{-\frac{1}{2} i Et} \left[ V + {\bf v}(t) \right],
$$
where $({\bf u},{\bf v})$ is a small perturbation satisfying the linearized equations
\begin{eqnarray}
\left\{ \begin{array}{l} 2 i \dot{{\bf u}}_n + E {\bf u}_n = \epsilon \left( {\bf v}_{n+1} - 2 {\bf v}_n + {\bf v}_{n-1} \right)
 + \Omega {\bf v}_n + i \gamma {\bf u}_n \\
\phantom{text} + 2 \left[ 2 \left( |U_n|^2 + |V_n|^2 \right) {\bf v}_n + (U_n^2 + V_n^2) \bar{{\bf v}}_n +
2 (\bar{U}_n V_n + U_n \bar{V}_n) {\bf u}_n + 2 U_n V_n \bar{{\bf u}}_n \right], \\
2 i \dot{{\bf v}}_n + E {\bf v}_n = \epsilon \left( {\bf u}_{n+1} - 2 {\bf u}_n + {\bf u}_{n-1} \right)
 + \Omega {\bf u}_n - i \gamma {\bf v}_n \\
\phantom{text} + 2 \left[ 2 \left( |U_n|^2 + |V_n|^2 \right) {\bf u}_n + (U_n^2 + V_n^2) \bar{{\bf u}}_n +
2 (\bar{U}_n V_n + U_n \bar{V}_n) {\bf v}_n + 2 U_n V_n \bar{{\bf v}}_n \right].
\end{array} \right.
\label{PT-dNLS-lin}
\end{eqnarray}

The spectral stability problem arises from the linearized equations (\ref{PT-dNLS-lin}) after the
separation of variables:
$$
{\bf u}(t) = \varphi e^{\frac{1}{2} \lambda t}, \quad
\bar{\bf u}(t) = \psi e^{\frac{1}{2} \lambda t}, \quad
{\bf v}(t) = \chi e^{\frac{1}{2} \lambda t}, \quad
\bar{\bf v}(t) = \nu e^{\frac{1}{2} \lambda t}.
$$
where $\phi := (\varphi,\psi,\chi,\nu)$ is the eigenvector corresponding to the spectral parameter $\lambda$.
Note that $(\varphi,\psi)$ and $(\chi,\nu)$ are no longer
complex conjugate to each other if $\lambda$ has a nonzero imaginary part.
The spectral problem can be written in the explicit form
\begin{equation}
\label{spectral-problem}
\left\{ \begin{array}{l}  (E + i \lambda - i \gamma) \varphi_n - \Omega \chi_n = \epsilon \left( \chi_{n+1} - 2 \chi_n + \chi_{n-1} \right) \\
\phantom{texttext} + 2 \left[ 2 |U_n|^2 (\psi_n + 2 \chi_n) + (U_n^2 + \bar{U}_n^2) (2 \varphi_n + \nu_n) \right], \\
(E - i \lambda + i \gamma) \psi_n - \Omega \nu_n = \epsilon \left( \nu_{n+1} - 2 \nu_n + \nu_{n-1} \right) \\
\phantom{texttext} + 2 \left[ 2 |U_n|^2 (\varphi_n + 2\nu_n) + (U_n^2 + \bar{U}_n^2) (2 \psi_n + \chi_n) \right], \\
(E + i \lambda + i \gamma) \chi_n - \Omega \varphi_n = \epsilon \left( \varphi_{n+1} - 2 \varphi_n + \varphi_{n-1} \right) \\
\phantom{texttext} + 2 \left[ 2 |U_n|^2 (2 \varphi_n + \nu_n) + (U_n^2 + \bar{U}_n^2) (\psi_n + 2 \chi_n) \right], \\
(E - i \lambda - i \gamma) \nu_n - \Omega \psi_n = \epsilon \left( \psi_{n+1} - 2 \psi_n + \psi_{n-1} \right) \\
\phantom{texttext} + 2 \left[ 2 |U_n|^2 (2 \psi_n + \chi_n) + (U_n^2 + \bar{U}_n^2) (\varphi_n + 2 \nu_n) \right],
\end{array} \right.
\end{equation}
where we have used the condition $V = \bar{U}$ for the $\PT$-symmetric breathers.
Recalling definition of the Hessian operator
$\mathcal{H}''_E$ in (\ref{PT-dNLS-var}), we can rewrite the spectral
problem (\ref{spectral-problem}) in the Hamiltonian form:
\begin{equation}
\mathcal{S} \mathcal{H}''_E \phi = i \lambda \phi,
\label{eq:eigenval}
\end{equation}
where $\mathcal{S}$ is a symmetric matrix with the blocks at each lattice node $n \in \mathbb{Z}$ given by
\begin{equation}
S := \left[ \begin{array}{cccc}
0 & 0 & 1 & 0 \\
0 & 0 & 0 & -1 \\
1 & 0 & 0 & 0 \\
0 & -1 & 0 & 0 \end{array} \right].
\end{equation}
We note the Hamiltonian symmetry of the eigenvalues of the spectral problem (\ref{eq:eigenval}).

\begin{proposition}
\label{prop-Ham-sym}
Eigenvalues of the spectral problem (\ref{eq:eigenval}) occur either as real or imaginary pairs
or as quadruplets in the complex plane.
\end{proposition}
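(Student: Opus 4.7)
The plan is to exhibit two commuting symmetries of the spectral problem $\mathcal{S}\mathcal{H}''_E\phi = i\lambda \phi$ and read off the Hamiltonian quadruple structure from them. The symmetries I will use are $\lambda \mapsto -\bar\lambda$ (the Hamiltonian symmetry, coming from the anti-symmetric structure behind $\mathcal{S}$) and $\lambda \mapsto \bar\lambda$ (a reality symmetry, coming from the doubled form of the extended variable $\phi_n = ({\bf u}_n,\bar{\bf u}_n,{\bf v}_n,\bar{\bf v}_n)$). Combining them produces the orbit $\{\lambda,-\lambda,\bar\lambda,-\bar\lambda\}$.

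For the first symmetry, I record that $\mathcal{S}$ is real symmetric with $\mathcal{S}^2 = I$, and that $\mathcal{H}''_E$ is self-adjoint on $\ell^2(\mathbb{Z})$. Taking adjoints in $\mathcal{S}\mathcal{H}''_E\phi = i\lambda\phi$ gives $\phi^*\mathcal{H}''_E\mathcal{S} = -i\bar\lambda\phi^*$, so $\mathcal{H}''_E\mathcal{S}$ has eigenvalue $-i\bar\lambda$. Since $(\mathcal{S}\mathcal{H}''_E)\mathcal{S} = \mathcal{S}(\mathcal{H}''_E\mathcal{S})$, the operators $\mathcal{S}\mathcal{H}''_E$ and $\mathcal{H}''_E\mathcal{S}$ are similar via $\mathcal{S}$, so they have the same spectrum. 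Therefore $-i\bar\lambda$ is also an eigenvalue of $\mathcal{S}\mathcal{H}''_E$, which translates into the presence of $-\bar\lambda$ in the spectrum of the problem (\ref{eq:eigenval}). The eigenvector for $-\bar\lambda$ is $\mathcal{S}\phi$.

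For the second symmetry, I will introduce at each lattice site the block-permutation $P$ that swaps the first coordinate with the second and the third with the fourth in the quadruplet $(u,\bar u,v,\bar v)$. A direct block computation shows $P\mathcal{L}_n P = \bar{\mathcal{L}}_n$ (the entries $-E \mp i\gamma + 4(U_n^2+\bar U_n^2)$ swap under this transposition, while all other entries are real) and $P\Delta P = \Delta$, hence $P\mathcal{H}''_E P = \bar{\mathcal{H}}''_E$. Similarly $P\mathcal{S}P = -\mathcal{S}$, so $P(\mathcal{S}\mathcal{H}''_E)P = -\mathcal{S}\bar{\mathcal{H}}''_E$. Applying this to the eigenvalue equation and then complex-conjugating yields $\mathcal{S}\mathcal{H}''_E(\overline{P\phi}) = i\bar\lambda\,\overline{P\phi}$, so $\bar\lambda$ is also an eigenvalue.

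Finally, I combine. If $\lambda$ lies in the spectrum, then so do $\bar\lambda$, $-\lambda$, and $-\bar\lambda$. When $\lambda\in\mathbb{R}$, this orbit collapses to the real pair $\{\lambda,-\lambda\}$; when $\lambda\in i\mathbb{R}$, it collapses to the imaginary pair $\{\lambda,-\lambda\}=\{\lambda,\bar\lambda\}$; otherwise all four values are distinct and form a complex quadruplet symmetric about both axes. I expect the Hamiltonian step to be immediate from $\mathcal{S}^*=\mathcal{S}$ and $\mathcal{S}^2=I$; the only place requiring a small calculation is verifying $P\mathcal{L}_n P = \bar{\mathcal{L}}_n$ and $P\mathcal{S}P = -\mathcal{S}$, and this is the main (but routine) obstacle because it is the one point where the explicit block structure of $\mathcal{H}''_E$ with its $\pm i\gamma$ entries has to be used.
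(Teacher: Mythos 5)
Your second symmetry is sound and, in operator form, reproduces exactly half of the paper's proof: with $P$ the per-site swap $1\leftrightarrow 2$, $3\leftrightarrow 4$, one indeed has $P\mathcal{L}_nP=\bar{\mathcal{L}}_n$, $P\mathcal{S}P=-\mathcal{S}$, and the eigenvector $\overline{P\phi}=(\bar{\psi},\bar{\varphi},\bar{\nu},\bar{\chi})$ for $\bar{\lambda}$ is precisely the one the paper writes down. The gap is in your first symmetry. Taking adjoints in $\mathcal{S}\mathcal{H}''_E\phi=i\lambda\phi$ only shows that $\phi$ is orthogonal to the range of $\mathcal{H}''_E\mathcal{S}+i\bar{\lambda}$, i.e.\ that $-i\bar{\lambda}$ lies in the \emph{spectrum} of $(\mathcal{S}\mathcal{H}''_E)^{*}=\mathcal{H}''_E\mathcal{S}$; it does not produce an eigenvector, and in $\ell^2(\mathbb{Z})$ (where embedded eigenvalues near the continuous bands are exactly the case of interest) point and residual spectrum must be distinguished. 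Moreover, your claimed eigenvector is wrong: from $\mathcal{S}^2=I$ one gets $(\mathcal{H}''_E\mathcal{S})(\mathcal{S}\phi)=\mathcal{H}''_E\phi=i\lambda\,\mathcal{S}\phi$, so $\mathcal{S}\phi$ is an eigenvector of the \emph{adjoint} operator with the \emph{same} eigenvalue $i\lambda$, not an eigenvector of $\mathcal{S}\mathcal{H}''_E$ with eigenvalue $-i\bar{\lambda}$; the identification you want would require $\phi$ to be a common eigenvector of the operator and its adjoint, which holds for normal operators but not here (a $2\times 2$ model $S=\mathrm{diag}(1,-1)$, $L=\left[\begin{smallmatrix}2&1\\1&-1\end{smallmatrix}\right]$ already shows $S\phi$ fails to be an eigenvector for $\bar{\mu}$). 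Since your $-\lambda$ and $-\bar{\lambda}$ conclusions are obtained by composing with this step, the quadruplet structure is not yet established.

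The repair is simple and is what the paper does componentwise: use the \emph{linear} reversal symmetry rather than the adjoint. Let $R$ be the per-site antidiagonal permutation $(\varphi,\psi,\chi,\nu)\mapsto(\nu,\chi,\psi,\varphi)$. A direct check on the blocks (using $V=\bar{U}$) gives $R\mathcal{L}_nR=\mathcal{L}_n$, $R\Delta R=\Delta$, and $R\mathcal{S}R=-\mathcal{S}$, whence $\mathcal{S}\mathcal{H}''_E(R\phi)=-R\mathcal{S}\mathcal{H}''_E\phi=-i\lambda\,R\phi$, so $R\phi$ is an eigenvector for $-\lambda$ — this is the paper's eigenvector $(\nu,\chi,\psi,\varphi)$. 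Combining this $R$-symmetry with your $P$-conjugation symmetry gives the full orbit $\{\lambda,\bar{\lambda},-\lambda,-\bar{\lambda}\}$ with explicit eigenvectors, and the collapse to real or purely imaginary pairs in the degenerate cases then goes through as you describe.
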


\begin{proof}
Assume that $\lambda \in \mathbb{C}$ is an eigenvalue of the spectral problem (\ref{eq:eigenval}) with
the eigenvector $(\varphi,\psi,\chi,\nu)$. Then, $\bar{\lambda}$ is an eigenvalue of the same problem with
the eigenvector $(\bar{\psi},\bar{\varphi},\bar{\nu},\bar{\chi})$, whereas $-\lambda$ is also an eigenvalue
with the eigenvector $(\nu,\chi,\psi,\varphi)$.
\end{proof}

If $\Omega < -|\gamma|$ and $|E| < E_0 := \sqrt{\Omega^2 - \gamma^2}$ (points 2 and 4 shown on Figure \ref{branches}),
Theorem \ref{theorem-critical-point} implies that the self-adjoint operator $\mathcal{H}''_E$
in $\ell^2(\mathbb{Z})$ is negative-definite with the exception of either three (point 2) or one (point 4)
simple positive eigenvalues. In this case, we can apply the following Hamilton--Krein index
theorem in order to characterize the spectrum of $S \mathcal{H}''_E$. \\

{\bf Hamilton--Krein Index Theorem} (Theorem 3.3 in \cite{Kapitula}).
{\em Let $L$ be a self-adjoint operator in $\ell^2$ with finitely many negative
eigenvalues $n(L)$, a simple zero eigenvalue with eigenfunction $v_0$,
and the rest of its spectrum is bounded from below by a positive number.
Let $J$ be a bounded invertible skew-symmetric operator in $\ell^2$.
Let $k_r$ be a number of positive real eigenvalues of $JL$, $k_c$ be a number of quadruplets
$\{\pm\lambda,\pm\bar{\lambda}\}$ that are neither in $\mathbb{R}$ nor in $i\mathbb{R}$,
and $k_i^{-}$ be a number of purely imaginary pairs of eigenvalues of $JL$
whose invariant subspaces lie in the negative subspace of $L$.
Let $D = \langle L^{-1} J^{-1} v_0, J^{-1} v_0 \rangle_{\ell^2}$ be finite and nonzero. Then,
\begin{equation}
\label{index-Krein}
        K_{HAM} = k_r + 2k_c + 2k_i^{-} = \left\{ \begin{array}{lr} n(L) - 1, \quad & D < 0, \\
        n(L), \quad & D > 0. \end{array} \right.
\end{equation}
}

\vspace{0.25cm}

\begin{lemma}
\label{theorem-index}
Fix $\gamma \neq 0$, $\Omega < -|\gamma|$, and $0 < |E| < E_0$,
where $E_0 := \sqrt{\Omega^2 - \gamma^2} > 0$.
For every $\epsilon > 0$ sufficiently small, $K_{HAM} = 2$ for branch (b) of Lemma \ref{lemma-dimer}
and $K_{HAM} = 0$ for branch (c) of Lemma \ref{lemma-dimer} with $\Omega < -2 \sqrt{2} |\gamma|$.
For branch (c) with $\Omega \in (-2 \sqrt{2} |\gamma|,-|\gamma|)$,
there exists a value $E_s \in (0,E_0)$ such that $K_{HAM} = 1$
for $0 < |E| < E_s$ and $K_{HAM} = 0$ for $E_s < |E| < E_0$.
\end{lemma}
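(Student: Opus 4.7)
The plan is to apply the Hamilton--Krein Index Theorem stated above to the spectral problem (\ref{eq:eigenval}). Under the hypotheses $\Omega<-|\gamma|$ and $0<|E|<E_0$, Theorem \ref{theorem-critical-point} tells us that $\mathcal{H}''_E$ has a simple zero eigenvalue spanned by $\sigma\Phi$, an infinite-dimensional negative subspace, and either three (branch (b)) or one (branch (c)) simple positive eigenvalues. Since the theorem demands only finitely many negative eigenvalues, I would apply it to $L:=-\mathcal{H}''_E$, so that $n(L)=3$ on branch (b) and $n(L)=1$ on branch (c). Replacing $\mathcal{H}''_E$ by $-\mathcal{H}''_E$ sends $\lambda\mapsto-\lambda$ in (\ref{eq:eigenval}), preserving the counts $k_r$, $k_c$, $k_i^{-}$ (after re-labeling the negative subspace), so $K_{HAM}$ is unchanged.

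The core of the argument is to identify the sign of $D$. Differentiating the equilibrium relation for $\Phi(E)$ in $E$ yields $\mathcal{H}''_E\,\partial_E\Phi=\sigma\Phi$ modulo the kernel, so $L^{-1}J^{-1}v_0$ can be expressed through $\partial_E\Phi$. A direct computation then reduces the definition of $D$ to
\[
D \;=\; c\,\frac{d Q_{u,v}(\Phi(E))}{dE}
\]
with an explicit nonzero constant $c$. At $\epsilon=0$ the breather is localized at $n=0$, and $Q_{u,v}=2A^2\cos(2\theta)+\mathcal{O}(\epsilon)$ with $A,\theta$ from Lemma \ref{lemma-dimer}. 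Using (\ref{parameterization-A})--(\ref{branch-E-A}) and (\ref{positivity-A}), I would differentiate $Q^2=4A^4\bigl[1-\gamma^2/(4A^2+\Omega)^2\bigr]$ and divide by $dE^2/dA^2$ to obtain
\[
\frac{dQ^2}{dE^2} \;=\; \frac{A^2\bigl[(4A^2+\Omega)^3-\gamma^2\Omega\bigr]}{(8A^2+\Omega)\bigl[2(4A^2+\Omega)^3-\gamma^2\Omega\bigr]}.
\]
Hence $dQ/dE=0$ on the branch exactly when $(4A^2+\Omega)^3=\gamma^2\Omega$, that is at $A_\ast^2=\tfrac14\bigl(|\Omega|-(\gamma^2|\Omega|)^{1/3}\bigr)$. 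On branch (b) the bound $4A^2+\Omega\geq|\gamma|>0$ (already used in the proof of Lemma \ref{proposition-unbounded}) forces $(4A^2+\Omega)^3>\gamma^2\Omega$ since $\Omega<0$, so $Q(E)$ is strictly monotone. On branch (c), comparison of $A_\ast^2$ with $A_-^2$ in (\ref{A-minus}) gives $A_\ast^2\in(0,A_-^2)$ precisely when $|\Omega|<2\sqrt2\,|\gamma|$, which is exactly the threshold stated in the lemma, and the corresponding $|E|$ defines $E_s\in(0,E_0)$.

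Finally, reading off $K_{HAM}$ from (\ref{index-Krein}): on branch (b) we have $n(L)=3$ and the sign of $D$, checked at the asymptotic limit $A\to\infty$, gives $K_{HAM}=n(L)-1=2$; on branch (c) with $\Omega<-2\sqrt2|\gamma|$ the quantity $Q(E)$ is monotone with a sign of $dQ/dE$ producing $K_{HAM}=0$; on branch (c) with $\Omega\in(-2\sqrt2|\gamma|,-|\gamma|)$ the sign of $dQ/dE$ flips at $E_s$, giving $K_{HAM}=1$ for $0<|E|<E_s$ and $K_{HAM}=0$ for $E_s<|E|<E_0$. The principal obstacle is the identification $D=c\,dQ/dE$: the Hamilton--Krein theorem is phrased for a skew-symmetric $J$, whereas in (\ref{eq:eigenval}) the matrix $\mathcal{S}$ is symmetric and the spectral parameter enters as $i\lambda$. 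Passing to a real representation of the eigenvalue equation and tracking signs carefully through the resulting identity is the delicate point. A secondary worry is that the $\mathcal{O}(\epsilon)$ corrections from Theorem \ref{theorem-existence-soliton} might shift $E_s$, but smoothness in $\epsilon$ of $\Phi(E)$ and non-vanishing of $dQ/dE$ away from $E_s$ at $\epsilon=0$ rule this out for $\epsilon>0$ sufficiently small.
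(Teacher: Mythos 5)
Your overall strategy coincides with the paper's: apply the Hamilton--Krein theorem with $L=-\mathcal{H}''_E$ (so $n(L)=3$ on branch (b) and $n(L)=1$ on branch (c) by Theorem \ref{theorem-critical-point}), reduce the sign of $D$ to the slope of $Q_{u,v}$ in $E$ evaluated at $\epsilon=0$ through the dimer parametrization of Lemma \ref{lemma-dimer}, and locate the threshold $\Omega=-2\sqrt{2}|\gamma|$ and the crossing value $E_s$; your computation of $dQ^2/dE^2$ and of the critical amplitude $A_*^2$ is an equivalent variant of the paper's formula (\ref{slope-explicit}) and its endpoint evaluation, and your $\epsilon$-continuity argument matches the paper's. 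However, there is a genuine gap precisely at the point you flag and then defer: you never determine the constant $c$ in $D=c\,dQ_{u,v}/dE$, and the entire dichotomy $K_{HAM}=n(L)$ versus $K_{HAM}=n(L)-1$ rests on the sign of $D$. With the wrong sign you would obtain $K_{HAM}=3$ on branch (b) and would exchange the intervals carrying $K_{HAM}=0$ and $K_{HAM}=1$ on branch (c); as written, the assignment of index values to the intervals $0<|E|<E_s$ and $E_s<|E|<E_0$ is asserted rather than derived. The paper closes this in two lines: take $J=i\mathcal{S}$, which is bounded, invertible and skew-adjoint with $J^{-1}=-i\mathcal{S}$ (this also resolves your worry about $\mathcal{S}$ being symmetric and the spectral parameter entering as $i\lambda$, since then $JL\phi=-i\mathcal{S}\mathcal{H}''_E\phi=\lambda\phi$), and use the generalized-kernel relation obtained by differentiating (\ref{eq:statE}) in $E$, namely $\mathcal{S}\mathcal{H}''_E(\partial_E\Phi)=\sigma\Phi$, i.e.\ $\mathcal{H}''_E(\partial_E\Phi)=\mathcal{S}\sigma\Phi$ --- not $\mathcal{H}''_E(\partial_E\Phi)=\sigma\Phi$ as you wrote. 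Then $D=-\langle(\mathcal{H}''_E)^{-1}\mathcal{S}\sigma\Phi,\mathcal{S}\sigma\Phi\rangle_{\ell^2}=-\langle\partial_E\Phi,\mathcal{S}\sigma\Phi\rangle_{\ell^2}=-dQ_{u,v}/dE$, with no undetermined constant.

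A second sign trap lurks in your specific route: passing from your $dQ^2/dE^2$ to $dQ/dE$ introduces the factor $E/Q=(8A^2+\Omega)/(2A^2)$, which is positive on branch (b) but negative on branch (c) (where $8A^2\leq|\Omega|$), and likewise the denominator factors $8A^2+\Omega$ and $2(4A^2+\Omega)^3-\gamma^2\Omega$ both change sign between the branches. If these flips are not tracked you again swap the $K_{HAM}=0$ and $K_{HAM}=1$ regions. Once $D=-dQ_{u,v}/dE$ is in hand and these factors are tracked, your formulas do give the stated result: on branch (b), $dQ/dE>0$ for all $E$, hence $D<0$ and $K_{HAM}=2$; on branch (c), $dQ/dE>0$ (hence $K_{HAM}=0$) exactly when $(4A^2+\Omega)^3<\Omega\gamma^2$, which holds on the whole branch for $\Omega<-2\sqrt{2}|\gamma|$ and for $E_s<|E|<E_0$ otherwise, while $(4A^2+\Omega)^3>\Omega\gamma^2$ for $0<|E|<E_s$ gives $D>0$ and $K_{HAM}=1$. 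So the missing piece is not conceptual but it is exactly the piece on which the conclusion depends; the paper's explicit evaluation of $D$ via the eigenvector (\ref{kernel}) and $\mathcal{S}^{-1}=\mathcal{S}$ is what your proposal still needs.
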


\begin{proof}
If $\gamma \neq 0$, $\Omega < -|\gamma|$, $|E| < E_0$, and $\epsilon > 0$ is sufficiently small,
Theorem \ref{theorem-critical-point} implies that the spectrum of $\mathcal{H}''_E$ in $\ell^2(\mathbb{Z})$
has finitely many positive eigenvalues and a simple zero eigenvalue with eigenvector $\sigma \Phi$.
Therefore, the Hamilton--Krein index theorem is applied in $\ell^2(\mathbb{Z})$ for $L = -\mathcal{H}''_E$,
$J = i \mathcal{S}$, and $v_0 = \sigma \Phi$. We shall verify that
\begin{equation}
\label{generalized-kernel-lemma}
\mathcal{H}''_E (\sigma \Phi) = 0, \quad \mathcal{S} \mathcal{H}''_E (\partial_E \Phi) = \sigma \Phi,
\end{equation}
where $\sigma \Phi$ is given by (\ref{kernel}) and $\partial_E \Phi$ denotes derivative of $\Phi$ with respect to parameter $E$.
The first equation $\mathcal{H}''_E (\sigma \Phi) = 0$ follows by Theorem \ref{theorem-critical-point}.
By differentiating equations (\ref{eq:statE}) in $E$, we obtain $\mathcal{H}''_E (\partial_E \Phi) = \mathcal{S} \sigma \Phi$
for every $E$, for which the solution $\Phi$ is differentiable in $E$. For $\epsilon = 0$, the limiting solution
of Lemma \ref{lemma-dimer} is differentiable in $E$ for every $E \neq 0$ and $E \neq \pm E_0$.
Due to smoothness of the continuation in $\epsilon$ by Theorem \ref{theorem-critical-point}, this property
holds for every $\epsilon > 0$ sufficiently small.

By using (\ref{generalized-kernel-lemma}) with $\mathcal{S}^{-1} = \mathcal{S}$, we obtain
\begin{equation}
D = -\langle (\mathcal{H}''_E)^{-1} \mathcal{S} \sigma \Phi, \mathcal{S} \sigma  \Phi \rangle_{\ell^2}
= -\langle \partial_E \Phi, \mathcal{S} \sigma \Phi \rangle_{\ell^2} =
-\sum_{n \in \mathbb{Z}} \partial_E \left( U_n \bar{V}_n + \bar{U}_n V_n \right)
= -\frac{d Q_{u,v}}{d E},
\end{equation}
where we have used the definition of $Q_{u,v}$ in (\ref{charge-PT-dNLS}).
We compute the slope condition at $\epsilon = 0$:
\begin{equation}
\label{slope-explicit}
\frac{d Q_{u,v}}{d E} \biggr|_{\epsilon = 0} = 2 \frac{d}{dE} \frac{A^2 E}{8A^2 + \Omega} =
4 (8A^2 + \Omega)  \frac{dA^2}{d E^2} \left[ 1 - \frac{\Omega \gamma^2}{(4 A^2 + \Omega)^3} \right],
\end{equation}
where relations (\ref{parameterization-A}) and (\ref{branch-E-A}) have been used.

For branch (b) of Lemma \ref{lemma-dimer} with $\Omega < -|\gamma|$,
we have $d A^2/d E^2 > 0$ and $|\Omega| < 4 A^2$, so that
$d Q_{u,v}/d E > 0$. By continuity, $d Q_{u,v}/d E $ remains strictly positive
for small $\epsilon > 0$. Thus, $D < 0$ and $K_{HAM} = 2$ by the Hamilton--Krein index theorem.

For branch (c) of Lemma \ref{lemma-dimer} with $\Omega < -|\gamma|$,
we have $d A^2/d E^2 < 0$ and $|\Omega| > 8 A^2$. Therefore, we only need to inspect
the sign of the expression $(4 A^2+\Omega)^3 - \Omega \gamma^2$.
If $\Omega < -2 \sqrt{2} |\gamma|$, then for every $A^2 \in (0,A_-^2)$, we have
$$
(4 A^2+\Omega)^3 - \Omega \gamma^2 \leq (4 A_-^2+\Omega)^3 - \Omega \gamma^2 = \frac{1}{8} \Omega^3 - \Omega \gamma^2
\leq \frac{1}{8} \Omega ( \Omega^2 - 8 \gamma^2) < 0,
$$
therefore, $D < 0$ and $K_{HAM} = 0$ by the Hamilton--Krein index theorem.

On the other hand, if $-2 \sqrt{2} |\gamma| < \Omega < -|\gamma|$,
we have $(4 A^2+\Omega)^3 - \Omega \gamma^2 < 0$ at $A = 0$ ($E = E_0$)
and $(4 A^2+\Omega)^3 - \Omega \gamma^2 > 0$ at $A = A_-$ ($E = 0$).
Since the dependence of $A$ versus $E$ is monotonic, there exists a value $E_s \in (0,E_0)$
such that $K_{HAM} = 1$ for $0 < |E| < E_s$ and $K_{HAM} = 0$ for $E_s < |E| < E_0$.
\end{proof}

If $K_{HAM} = 0$ and $D \neq 0$, orbital stability of a critical point
of $H_E$ in space $\ell^2(\mathbb{Z})$ can be proved from the
Hamilton--Krein theorem (see \cite{Kapitula} and references therein).
Orbital stability of breathers is understood in the following sense.

\begin{definition}
We say that the breather solution (\ref{stationary}) is orbitally stable in $\ell^2(\mathbb{Z})$
if for every $\nu > 0$ sufficiently small, there exists $\delta > 0$
such that if $\psi(0) \in \ell^2(\mathbb{Z})$ satisfies $\| \psi(0) - \Phi \|_{\ell^2} \leq \delta$,
then the unique global solution $\psi(t) \in \ell^2(\mathbb{Z})$, $t \in \mathbb{R}$ to the
$\PT$-symmetric dNLS equation (\ref{PT-dNLS}) satisfies the bound
\begin{equation}
\inf_{\alpha \in \mathbb{R}} \| e^{i \alpha} \psi(t) - \Phi \|_{\ell^2} \leq \nu, \quad \mbox{\rm for every} \;\; t \in \mathbb{R}.
\end{equation}
\label{def-stability}
\end{definition}

The definition of instability of breathers is given by negating Definition \ref{def-stability}.
The following result gives orbital stability or instability for branch (c) shown on Figure \ref{branches}.

\begin{theorem}
\label{theorem-bounded}
Fix $\gamma \neq 0$, $\Omega < -|\gamma|$, and $0 < |E| < E_0$.
For every $\epsilon > 0$ sufficiently small, the breather $(U,V)$
for branch (c) of Lemma \ref{lemma-dimer}
is orbitally stable in $\ell^2(\mathbb{Z})$ if $\Omega < -2 \sqrt{2} |\gamma|$.
For every $\Omega \in (-2 \sqrt{2} |\gamma|,-|\gamma|)$,
there exists a value $E_s \in (0,E_0)$ such that the breather $(U,V)$
is orbitally stable in $\ell^2(\mathbb{Z})$  if $E_s < |E| < E_0$ and unstable if $0 < |E| < E_s$.
\end{theorem}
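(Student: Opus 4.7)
The plan is to combine the Hamilton--Krein index count of Lemma \ref{theorem-index} with the standard orbital-stability / nonlinear-instability dichotomy for Hamiltonian lattice systems, treating the two parameter subranges in the statement separately.

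In the $K_{HAM} = 0$ regimes (all of branch (c) if $\Omega < -2\sqrt{2}|\gamma|$, and the subrange $E_s < |E| < E_0$ otherwise), I would use the conserved Lyapunov functional
\begin{equation*}
\Lambda(\psi) := H_{u,v}(\psi) - E\,Q_{u,v}(\psi) - H_E(\Phi),
\end{equation*}
which is preserved by the flow of (\ref{PT-dNLS}) thanks to Section \ref{symmetry}. Parametrizing nearby solutions as $\psi(t) = e^{i\alpha(t)}(\Phi + \phi(t))$ with the gauge phase $\alpha(t)$ chosen to enforce $\langle \phi,\sigma\Phi\rangle_{\ell^2} = 0$ (so as to quotient out the gauge kernel of $\mathcal{H}''_E$ identified in Theorem \ref{theorem-critical-point}), the expansion (\ref{quadratic-expansion}) combined with the Hamilton--Krein theorem yields a coercive bound $-\Lambda(\psi) \geq c\,\|\phi\|_{\ell^2}^2 - C\,\|\phi\|_{\ell^2}^3$. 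The ingredients are that $-\mathcal{H}''_E$ has exactly one simple negative eigenvalue, a simple zero eigenvalue along $\sigma\Phi$, and the rest of its spectrum strictly positive (the essential spectrum sits at $-\mu_\pm = -\Omega \mp \sqrt{E^2+\gamma^2} > 0$ because $\Omega < -|\gamma|$ and $|E| < E_0$), while the single negative direction is eliminated by restriction to the conserved level set $Q_{u,v} = \mathrm{const}$; this elimination is precisely the content of $K_{HAM} = 0$ together with the slope $dQ_{u,v}/dE > 0$ derived in the proof of Lemma \ref{theorem-index}. Local well-posedness of (\ref{PT-dNLS}) in $\ell^2(\mathbb{Z})$ is immediate because the cubic nonlinearity is locally Lipschitz there (via $\ell^2 \hookrightarrow \ell^\infty$), and conservation of $\Lambda$ together with the coercive bound extends local solutions globally and delivers the estimate required by Definition \ref{def-stability}.

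For the instability part, if $\Omega \in (-2\sqrt{2}|\gamma|, -|\gamma|)$ and $0 < |E| < E_s$, Lemma \ref{theorem-index} gives $K_{HAM} = 1$, and since $K_{HAM} = k_r + 2k_c + 2k_i^-$ is a sum of nonnegative integers, this forces $k_r = 1$ and $k_c = k_i^- = 0$. Hence the linearization (\ref{eq:eigenval}) possesses a real eigenvalue $\lambda_0 > 0$, so the breather is spectrally unstable. A standard unstable-manifold / bootstrap argument for ODEs in $\ell^2$---project a small perturbation of size $\delta$ onto the unstable eigenvector and follow the exponential departure on the logarithmic time window $t \sim \lambda_0^{-1}\log(\nu/\delta)$ before the cubic terms of (\ref{PT-dNLS-lin}) saturate---then upgrades this to nonlinear instability in the sense of Definition \ref{def-stability}.

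The substantive analytical work has already been carried out in Theorem \ref{theorem-critical-point} and Lemma \ref{theorem-index}; what is left is the standard Lyapunov / unstable-manifold machinery, and the only point requiring care in the lattice setting is the uniform-in-$\epsilon$ validity of the coercive estimate. This rests on two quantitative facts: the persistence under small $\epsilon > 0$ of the spectral gap of $-\mathcal{H}''_E$ around zero, provided by the perturbation theory of Theorem \ref{theorem-critical-point}, and the strict positivity of $dQ_{u,v}/dE$ with a bound away from zero on closed subintervals of the stability region, guaranteed by formula (\ref{slope-explicit}). Both facts degenerate exactly at the endpoints $|E| \in \{0, E_s, E_0\}$ that are excluded from the hypotheses, confirming the sharpness of the statement.
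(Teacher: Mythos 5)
Your proposal is correct and takes essentially the same route as the paper: the paper proves Theorem \ref{theorem-bounded} as a direct corollary of the Hamilton--Krein index count of Lemma \ref{theorem-index} (where $K_{HAM}=0$ corresponds to $D=-dQ_{u,v}/dE<0$ and $K_{HAM}=1$ forces a real unstable eigenvalue) together with the standard orbital stability/instability theory cited from \cite{Kapitula}. Your Lyapunov-coercivity argument on the level set of $Q_{u,v}$ and the unstable-manifold step are precisely the standard machinery that this citation encapsulates, so no new ideas are needed beyond what you wrote.
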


\begin{proof}
The theorem is a corollary of Lemma \ref{theorem-index} for branch (c) of Lemma \ref{lemma-dimer}
and the orbital stability theory from \cite{Kapitula}.
\end{proof}

Orbital stability of breathers for branches (a) and (b) of Lemma \ref{lemma-dimer}
does not follow from the standard theory because $K_{HAM} = \infty$ for $|E| > E_0$
and $K_{HAM} = 2 > 0$ for branch (b) with $|E| < E_0$.
Nevertheless, by using smallness of parameter $\epsilon$ and the construction
of the breather $(U,V)$ in Theorem \ref{theorem-existence-soliton},
spectral stability of breathers can be considered directly.
Spectral stability and instability of breathers is understood in the following sense.

\begin{definition}
\label{def-spectral}
We say that the breather solution (\ref{stationary}) is spectrally stable if
$\lambda \in i\mathbb{R}$ for every bounded solution of the spectral problem (\ref{eq:eigenval}).
On the other hand, if the spectral problem (\ref{eq:eigenval}) admits an eigenvalue
$\lambda \notin i \mathbb{R}$ with an eigenvector in $\ell^2(\mathbb{R})$, we say
that the breather solution (\ref{stationary}) is spectrally unstable.
\end{definition}

The following theorem gives spectral stability of breathers
for branches (a) and (b) shown on Figure \ref{branches}.

\begin{theorem}
\label{theorem-stability}
Fix $\gamma \neq 0$, $|\Omega| > |\gamma|$, and $E$ along branches (a) and (b) of Lemma
\ref{lemma-dimer} with $E \neq 0$ and $E \neq \pm E_0$. For every $\epsilon > 0$ sufficiently small,
the spectral problem (\ref{eq:eigenval})
admits a double zero eigenvalue with the generalized eigenvectors
\begin{equation}
\label{generalized-kernel}
\mathcal{H}''_E (\sigma \Phi) = 0, \quad \mathcal{S} \mathcal{H}''_E (\partial_E \Phi) = \sigma \Phi,
\end{equation}
where the eigenvector $\sigma \Phi$ is given by (\ref{kernel}) and
the generalized eigenvector $\partial_E \Phi$ denotes derivative of $\Phi$ with respect to parameter $E$.
For every $E$ such that the following non-degeneracy condition is satisfied,
\begin{equation}
\label{assumption-non-degeneracy}
2 \sqrt{(4A^2 + \Omega)^2 - \frac{\Omega \gamma^2}{4 A^2 + \Omega}} \neq E \pm \sqrt{\Omega^2 - \gamma^2},
\end{equation}
the breather $(U,V)$ is spectrally stable.
\end{theorem}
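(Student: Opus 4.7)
\noindent\emph{Proof plan.} The plan is to organize the argument into three parts: first, establish the $2$-dimensional Jordan block at $\lambda=0$ asserted in (\ref{generalized-kernel}); second, diagonalize the spectral problem (\ref{eq:eigenval}) in the uncoupled limit $\epsilon=0$ and verify that its full spectrum then lies on the imaginary axis; third, perturb in $\epsilon$, using the perturbation theory for linear operators together with the Hamiltonian symmetry of Proposition \ref{prop-Ham-sym} and the non-degeneracy condition (\ref{assumption-non-degeneracy}) to rule out the escape of any eigenvalue from $i\mathbb{R}$.

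For the first step, the identity $\mathcal{H}''_E(\sigma\Phi)=0$ is already contained in Theorem \ref{theorem-critical-point} and encodes the gauge invariance $(u,v)\mapsto(e^{i\alpha}u,e^{i\alpha}v)$. To obtain $\mathcal{S}\mathcal{H}''_E(\partial_E\Phi)=\sigma\Phi$, I would differentiate the stationary system (\ref{Hamiltonian-PT-dNLS-var}) in $E$ (using smoothness of $\Phi$ in $E$ from Theorem \ref{theorem-existence-soliton}) and recognise the resulting inhomogeneity as $\sigma\Phi$ after one multiplication by $\mathcal{S}$. That the Jordan chain has length exactly two follows from $dQ_{u,v}/dE\neq 0$, which holds on branches (a) and (b) by the slope analysis in the proof of Lemma \ref{theorem-index}: the bracketed factor $1-\Omega\gamma^2/(4A^2+\Omega)^3$ is strictly positive in both cases.

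For the second step, the breather of Theorem \ref{theorem-existence-soliton} is supported only at $n=0$ when $\epsilon=0$, so $\mathcal{S}\mathcal{H}''_E$ decouples into per-site $4\times 4$ blocks. For $n\neq 0$ the block is $\mathcal{S}\mathcal{L}_n$ with $\mathcal{L}_n$ from (\ref{block-L-n}), and a direct diagonalization of the two $2\times 2$ sub-blocks it splits into gives the four eigenvalues $i\lambda\in\{\pm E\pm E_0\}$, all purely imaginary in $\lambda$ and each of infinite multiplicity over $n$. For $n=0$, I would compute the characteristic polynomial of $\mathcal{S}\mathcal{L}_0$ symbolically, much as in the proof of Theorem \ref{theorem-critical-point}; after substituting the parametrization (\ref{parameterization-A})--(\ref{branch-E-A}), the spectrum of $\mathcal{S}\mathcal{L}_0$ reduces to $\{0,0,\pm i\lambda_0\}$ with
\begin{equation*}
\lambda_0 \;=\; 2\sqrt{(4A^2+\Omega)^2 - \frac{\Omega\gamma^2}{4A^2+\Omega}},
\end{equation*}
whose radicand is strictly positive on branches (a) and (b) by exactly the sign inequalities used in the proof of Lemma \ref{proposition-unbounded}. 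Hence the entire spectrum of $\mathcal{S}\mathcal{H}''_E$ lies in $i\mathbb{R}$ at $\epsilon=0$.

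For the third step, the essential spectrum at small $\epsilon>0$ comes from the Fourier analysis of the linearization at infinity in parallel with Proposition \ref{prop-zero-equilibrium}, producing the four bands $\lambda\in\pm i\bigl(E\pm\sqrt{(\Omega-4\epsilon\sin^2(k/2))^2-\gamma^2}\bigr)$ for $k\in[-\pi,\pi]$; these stay in $i\mathbb{R}$ because $|\gamma|<|\Omega|$ keeps us in the unbroken $\mathcal{PT}$-symmetry regime. The double zero eigenvalue is pinned by gauge invariance together with the slope computation of the first step. The pair $\pm i\lambda_0$ perturbs continuously as a simple Hamiltonian pair, and by Proposition \ref{prop-Ham-sym} a simple purely imaginary eigenvalue can leave the imaginary axis only through collision with another eigenvalue or with the essential spectrum. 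The assumption $E\neq 0$ combined with Lemma \ref{proposition-unbounded} gives $\lambda_0\neq 0$, ruling out collision with the Jordan block at the origin, while (\ref{assumption-non-degeneracy}) is exactly $\lambda_0\neq E\pm E_0$, ruling out collision with the band edges at $\epsilon=0$. The main obstacle is this last point: since the essential spectrum emerges from infinite-multiplicity points at $\epsilon=0$, one must control the $O(\epsilon)$ widening of the bands against the $O(\epsilon)$ drift of $\pm i\lambda_0$, and (\ref{assumption-non-degeneracy}) is the quantitative input that makes the perturbation-theory argument go through uniformly in small $\epsilon$.
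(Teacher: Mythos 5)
Your proposal follows the same overall strategy as the paper (anti-continuum decoupling at $\epsilon=0$, the same per-site eigenvalue computations, persistence of the double zero eigenvalue via gauge invariance and $dQ_{u,v}/dE\neq 0$, and persistence of the simple pair $\pm i\lambda_0$ via Hamiltonian symmetry plus the non-degeneracy condition), and those parts are correct. The gap is in your third step, at precisely the point the paper spends most of its proof on: the fate of the eigenvalues $\pm\lambda_+$ and $\pm\lambda_-$, which at $\epsilon=0$ are \emph{semi-simple of infinite multiplicity}. You treat them only through the essential spectrum, arguing that the bands determined by the linearization at infinity stay on $i\mathbb{R}$ in the unbroken $\PT$ regime. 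That is true but insufficient: the localized breather core at $n=0$ is a defect which can split \emph{isolated} eigenvalues off from these infinitely degenerate points, and nothing in your argument prevents such eigenvalues from being complex. The rigidity statement you invoke (``a simple purely imaginary eigenvalue can only leave the axis through a collision'') applies to $\pm i\lambda_0$ but not to $\pm\lambda_\pm$, which are not simple; Proposition \ref{prop-Ham-sym} alone does not pin a degenerate eigenvalue to the imaginary axis, and indeed the instability bubble on branch (b) shows that complex eigenvalues do emerge from a neighbourhood of $\lambda_+$ when a resonance occurs.

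The paper closes exactly this hole: near $\lambda=\lambda_\pm$ it solves the $n=0$ equation for $\phi_0$ with $|\phi_0|\leq C\epsilon(|\phi_1|+|\phi_{-1}|)$, reduces the problem to a half-lattice problem with a Dirichlet condition at $n=0$, obtains the continuous bands via the sine--Fourier transform (equation (\ref{continuous-spectrum})), and then appeals to the second-order perturbation theory of \cite{PelSak} to conclude that any isolated eigenvalues bifurcating near $\pm\lambda_\pm$ remain purely imaginary under the assumptions $E\neq 0$ and $E\neq\pm E_0$ (this is also where those two hypotheses, which your outline never really uses beyond $\lambda_0\neq 0$, enter). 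An alternative way to repair your argument would be a Krein-signature argument: at $\epsilon=0$ all eigenvectors at each point $\lambda_\pm$ carry the same signature $K_n=2\Omega E_0(E_0\pm E)$ (as computed in Lemma \ref{lemma-Krein}), and spectrum of definite Krein signature cannot leave the imaginary axis under small Hamiltonian perturbation unless it collides with spectrum of opposite signature, which the non-degeneracy condition (\ref{assumption-non-degeneracy}) excludes. As written, however, you supply neither the reduction-plus-perturbation computation nor the signature argument, so the spectral stability claim is not yet proved. (Minor additional remark: your quantitative worry about ``$O(\epsilon)$ widening versus $O(\epsilon)$ drift'' is not the hard point, since the non-degeneracy condition gives an $O(1)$ gap at $\epsilon=0$.)
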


\begin{proof}
If $\epsilon = 0$, the breather solution of Theorem \ref{theorem-existence-soliton}
is given by $U_n = 0$ for every $n \neq 0$ and $U_0 = A e^{i \theta}$,
where $A$ and $\theta$ are defined by Lemma \ref{lemma-dimer}.
In this case, the spectral problem (\ref{spectral-problem}) decouples
into $4$-by-$4$ blocks for each lattice node $n \in \mathbb{Z}$.
Recall that $\mathcal{H}_E'' = \mathcal{L}$ at $\epsilon = 0$.

For $n = 0$, eigenvalues $\lambda$ are determined by the $4$-by-$4$ matrix block $-i S \mathcal{L}_0$.
Using relations (\ref{parameterization-A}) and (\ref{branch-E-A}), as well as
symbolic computations with MAPLE, we found that the $4$-by-$4$ matrix block $-i S \mathcal{L}_0$ has
a double zero eigenvalue and a pair of simple eigenvalues at $\lambda = \pm \lambda_0$, where
\begin{equation}
\label{lambda-0-eig}
\lambda_0 = 2 i \sqrt{(4A^2 + \Omega)^2 - \frac{\Omega \gamma^2}{4 A^2 + \Omega}}.
\end{equation}

For $n \in \mathbb{Z} \backslash \{0\}$, eigenvalues $\lambda$ are determined by the $4$-by-$4$ 
matrix block $-i S \mathcal{L}_n$, where $\mathcal{L}_n$ is given by (\ref{block-L-n}).
If $|\gamma| < |\Omega|$, $E \neq 0$, and $E \neq \pm E_0$, where $E_0 := \sqrt{\Omega^2 - \gamma^2}$,
each block has four simple eigenvalues $\pm \lambda_+$ and $\pm \lambda_-$, where
\begin{equation}
\label{lambda-plus-minus-eig}
\lambda_{\pm} := i ( E \pm E_0),
\end{equation}
so that $\lambda_{\pm} \in i \mathbb{R}$. Since there are infinitely many nodes with $n \neq 0$,
the four eigenvalues are semi-simple and have infinite multiplicity.

If $\epsilon > 0$ is sufficiently small, we use perturbation theory for linear operators from Section \ref{variational}.

\begin{itemize}
\item The double zero eigenvalue persists with respect to $\epsilon \neq 0$ at zero
because of the gauge invariance of the breather $(U,V)$ (with respect to rotation of the complex phase).
Indeed, $\mathcal{H}''_E (\sigma \Phi) = 0$ follows from the result of Theorem \ref{theorem-critical-point}.
The generalized eigenvector is defined by equation $\mathcal{S} \mathcal{H}''_E \Psi = \sigma \Phi$,
which is equivalent to equation $\mathcal{H}''_E \Psi = (V,\bar{V},U,\bar{U})^T$.
Differentiating equations (\ref{eq:statE}) in $E$, we obtain $\Psi = \partial_E \Phi$.
Since ${\rm dim}[{\rm Ker}(\mathcal{H}_E'')] = 1$ and
\begin{equation}
\langle \sigma \Phi, \mathcal{S} \partial_E \Phi \rangle_{\ell^2} =
\sum_{n \in \mathbb{Z}} \partial_E \left( U_n \bar{V}_n + \bar{U}_n V_n \right)
= \frac{d Q_{u,v}}{d E},
\end{equation}
the second generalized eigenvector $\tilde{\Psi} \in \ell^2(\mathbb{Z})$
exists as a solution of equation $\mathcal{S} \mathcal{H}''_E \tilde{\Psi} = \partial_E \Phi$ if and only if $d Q_{u,v}/d E = 0$.
It follows from the explicit computation (\ref{slope-explicit}) that if $\epsilon = 0$, then $d Q_{u,v}/d E \neq 0$
for every $E$ along branches (a) and (b) of Lemma \ref{lemma-dimer}.
By continuity, $d Q_{u,v}/d E \neq 0$ for small $\epsilon > 0$.
Therefore, the zero eigenvalue of the operator $-i \mathcal{S} \mathcal{H}''_E$ is exactly double for small $\epsilon > 0$.

\item Using the same computation (\ref{slope-explicit}), it is clear that $\lambda_0 \in i \mathbb{R}$
for every $E$ along branches (a) and (b) of Lemma \ref{lemma-dimer}.
Assume that $\lambda_0 \neq \pm \lambda_+$ and $\lambda_0 \neq \pm \lambda_-$,
which is expressed by the non-degeneracy condition (\ref{assumption-non-degeneracy}). Then,
the pair $\pm \lambda_0$ is isolated from the rest of the spectrum of the operator $-i \mathcal{S} \mathcal{H}''_E$ at $\epsilon = 0$.
Since the eigenvalues $\lambda = \pm \lambda_0$ are simple and purely imaginary, they
persist on the imaginary axis for $\epsilon \neq 0$ because they cannot
leave the imaginary axis by the Hamiltonian symmetry of Proposition \ref{prop-Ham-sym}.

\item If $|\gamma| < |\Omega|$, $E \neq 0$, and $E \neq \pm E_0$,
the semi-simple eigenvalues $\pm \lambda_+$ and $\pm \lambda_-$ of infinite multiplicity
are nonzero and located at the imaginary axis at different points for $\epsilon = 0$.
They persist on the imaginary axis for $\epsilon \neq 0$ according to the following perturbation argument.
First, for the central site $n = 0$, the spectral problem (\ref{spectral-problem}) can be
written in the following abstract form
$$
\left( S \mathcal{L}_0(\epsilon) - 2 \epsilon S - i \lambda I \right) \phi_0 = -\epsilon S(\phi_1 + \phi_{-1}),
$$
where $\mathcal{L}_0(\epsilon)$ denotes a continuation of $\mathcal{L}_0$ in $\epsilon$.
Thanks to the non-degeneracy condition (\ref{assumption-non-degeneracy}) as well as
the condition $\lambda_{\pm} \neq 0$, the matrix $S \mathcal{L}_0 - i \lambda_{\pm} I$ is invertible. 
By continuity, the matrix $S \mathcal{L}_0(\epsilon) - i \lambda I$ is invertible for 
every $\epsilon$ and $\lambda$ near $\epsilon = 0$ and $\lambda = \lambda_{\pm}$. 
Therefore, there is a unique $\phi_0$ given by
$$
\phi_0 = -\epsilon \left( S \mathcal{L}_0(\epsilon) - 2 \epsilon S - i \lambda I \right)^{-1} S(\phi_1 + \phi_{-1}),
$$
which satisfies $| \phi_0 | \leq C \epsilon (|\phi_1| + |\phi_{-1}|)$ near $\epsilon = 0$ and $\lambda = \lambda_{\pm}$,
where $C$ is a positive $\epsilon$- and $\lambda$-independent constant.
Next, for either $n \in \mathbb{N}$ or $-n \in \mathbb{N}$, the spectral problem (\ref{spectral-problem})
can be represented in the form
$$
S \mathcal{L}_n(\epsilon) \phi_n + \epsilon S (\Delta \phi)_n - i \lambda \phi_n = - \delta_{n,\pm 1} \epsilon S \phi_0, \quad \pm n \in \mathbb{N},
$$
where $\mathcal{L}_n(\epsilon)$ denotes a continuation of $\mathcal{L}_n$ given by (\ref{block-L-n}) in $\epsilon$,
whereas the operator $\Delta$ is applied with zero end-point condition at $n = 0$.
We have $\mathcal{L}_n(\epsilon) = \mathcal{L}_n + \mathcal{O}(\epsilon^2)$ and $\epsilon S \phi_0 = \mathcal{O}(\epsilon^2)$
near $\epsilon = 0$ and $\lambda = \lambda_{\pm}$. Therefore, up to the first order of the perturbation theory,
the spectral parameter $\lambda$ near $\lambda_{\pm}$ is defined from the truncated eigenvalue
problem
\begin{equation}
\label{truncated-problem}
S \mathcal{L}_n \phi_n + \epsilon S (\Delta \phi)_n = i \lambda \phi_n, \quad \pm n \in \mathbb{N},
\end{equation}
which is solved with the discrete Fourier transform (\ref{discrete-Fourier}).
In order to satisfy the Dirichlet end-point condition at $n = 0$, the sine--Fourier transform
must be used, which does not affect the characteristic equation for the purely continuous spectrum
of the spectral problem (\ref{truncated-problem}). By means of routine computations, we obtain the
characteristic equation in the form, see also equation (\ref{dispersion-relation}),
\begin{equation}
\label{continuous-spectrum}
(E \pm i \lambda)^2 + \gamma^2 - \left( \Omega - 4 \epsilon \sin^2 \frac{k}{2} \right)^2 = 0,
\end{equation}
where $k \in [-\pi,\pi]$ is the parameter of the discrete Fourier transform (\ref{discrete-Fourier}).
Solving the characteristic equation (\ref{continuous-spectrum}), we obtain four branches of the continuous spectrum
\begin{equation}
\lambda = \pm i \left( E \pm \sqrt{\left(\Omega - 4 \epsilon \sin^2 \frac{k}{2}\right)^2 - \gamma^2} \right),
\end{equation}
where the two sign choices are independent from each other.
If $|\Omega| > |\gamma|$ is fixed and $\epsilon > 0$ is small, the four branches of the continuous spectrum
are  located on the imaginary axis near the points $\pm \lambda_+$ and $\pm \lambda_-$ given by (\ref{lambda-plus-minus-eig}).

In addition to the continuous spectrum given by (\ref{continuous-spectrum}), there may exist isolated
eigenvalues near $\pm \lambda_+$ and $\pm \lambda_-$, which are found from the second-order 
perturbation theory \cite{PelSak}. Under the condition $E \neq 0$ and $E \neq \pm E_0$, 
these eigenvalues are purely imaginary. Therefore,
the infinite-dimensional part of the spectrum of the operator $-i \mathcal{S} \mathcal{H}^{''}_E$
persists on the imaginary axis for $\epsilon \neq 0$ near the points $\pm \lambda_+$ and $\pm \lambda_-$
of infinite algebraic multiplicity.
\end{itemize}

The statement of the lemma follows from the perturbation theory and
the fact that all isolated eigenvalues and the continuous spectrum of
$-i \mathcal{S} \mathcal{H}_E''$ are purely imaginary.
\end{proof}

\begin{remark}
In the asymptotic limit $E^2 = 64 A^4 + \mathcal{O}(A^2)$ as $A \to \infty$,
the eigenvalues $\lambda_0$ and $\lambda_{\pm}$ defined by (\ref{lambda-0-eig}) and (\ref{lambda-plus-minus-eig}) 
are given asymptotically by 
\begin{eqnarray}
\lambda_0 \approx i |E|, \quad \lambda_+ \approx i E, \quad \lambda_- \approx i E,
\end{eqnarray}
where the remainder terms are $\mathcal{O}(1)$ as $|E| \to \infty$. The values
$\lambda_0$, $\lambda_+$, and $\lambda_-$ are close to each other as $E \to +\infty$.
\end{remark}

\begin{remark}
Computations in the proof of Theorem \ref{theorem-stability} can be extended to the branch
(c) of Lemma \ref{lemma-dimer}. Indeed, $\lambda_0 \in i \mathbb{R}$
for branch (c) with either $\Omega < -2 \sqrt{2} |\gamma|$ or $\Omega \in (-2 \sqrt{2} |\gamma|,-|\gamma|)$,
and $E$ near $\pm E_0$. On the other hand, $\lambda_0 \in \mathbb{R}$
if $\Omega \in (-2 \sqrt{2} |\gamma|,-|\gamma|)$ and $E$ near $0$.
As a result, branch (c) is spectrally stable in the former case and is spectrally unstable
in the latter case, in agreement with Theorem \ref{theorem-bounded}.
\end{remark}

\begin{remark}
Observe in the proof of Theorem \ref{theorem-stability} that $\lambda_{\pm} \notin i \mathbb{R}$
if $|\Omega| < |\gamma|$. In this case, branch (b) of Lemma \ref{lemma-dimer}
is spectrally unstable. This instability corresponds to the instability of the zero equilibrium for
$|\Omega| < |\gamma|$, in agreement with the result of Proposition \ref{prop-zero-equilibrium}.
\end{remark}

Before presenting numerical approximations of eigenvalues of the spectral
problem (\ref{eq:eigenval}), we compute the Krein signature of wave continuum.
This helps to interpret instabilities and resonances that arise when
isolated eigenvalues $\pm \lambda_0$ cross the continuous bands near points
$\pm \lambda_+$ and $\pm \lambda_-$. The Krein signature of simple isolated eigenvalues is
defined as follows.

\begin{definition}
\label{def-Krein}
Let $\phi \in \ell^2(\mathbb{Z})$ be an eigenvector of the spectral problem (\ref{eq:eigenval})
for an isolated simple eigenvalue $\lambda_0 \in i \mathbb{R}$. Then, the energy quadratic form
$\langle \mathcal{H}_E'' \phi, \phi \rangle_{\ell^2}$ is nonzero and its sign
is called the Krein signature of the eigenvalue $\lambda_0$.
\end{definition}

Definition \ref{def-Krein} is used to simplify the presentation. Similarly, one can define the Krein signature
of isolated multiple eigenvalues and the Krein signature of the continuous spectral bands
in the spectral problem (\ref{eq:eigenval}) \cite{Kapitula}. The following lemma
characterizes Krein signatures of the spectral points arising in the proof of Theorem \ref{theorem-stability}.

\begin{lemma}
\label{lemma-Krein}
Fix $\gamma \neq 0$, $|\Omega| > |\gamma|$, and $E > 0$ with $E \neq \pm E_0$.
Assume the non-degeneracy condition (\ref{assumption-non-degeneracy}).
For every $\epsilon > 0$ sufficiently small, we have the following for
the corresponding branches of Lemma \ref{lemma-dimer}:
\begin{itemize}
\item[(a)] the subspaces of $-i \mathcal{S} \mathcal{H}''_E$ in $\ell^2(\mathbb{Z})$
near $\pm \lambda_+$, $\pm \lambda_-$, and $\pm \lambda_0$ have
positive, negative, and positive Krein signature, respectively;
\item[(b)] the subspaces of $-i \mathcal{S} \mathcal{H}''_E$ in $\ell^2(\mathbb{Z})$
near $\pm \lambda_+$, $\pm \lambda_-$, and $\pm \lambda_0$ have
negative, positive (if $E > E_0$) or negative (if $E < E_0$), and positive Krein signature, respectively;
\item[(c)] all subspaces of $-i \mathcal{S} \mathcal{H}''_E$ in $\ell^2(\mathbb{Z})$
near $\pm \lambda_+$, $\pm \lambda_-$, and $\pm \lambda_0$ (if $\lambda_0 \in i \mathbb{R}$) 
have negative Krein signature.
\end{itemize}
\end{lemma}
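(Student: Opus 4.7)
The plan is to compute the Krein signatures explicitly at $\epsilon = 0$, where $\mathcal{H}_E'' = \mathcal{L}$ decouples into independent $4\times 4$ blocks---the central block $\mathcal{L}_0$ governing the isolated pair $\pm\lambda_0$, and the identical off-central blocks $\mathcal{L}_n$ for $n\neq 0$ governing the bands near $\pm\lambda_\pm$---and then to invoke the perturbation theory for linear operators already used in the proof of Theorem \ref{theorem-stability} to transfer these signatures to small $\epsilon > 0$. The non-degeneracy condition (\ref{assumption-non-degeneracy}) guarantees that $\pm\lambda_0$ stays bounded away from the bands at $\pm\lambda_\pm$, so the three types of spectral points deform continuously and independently, and their Krein signatures are preserved.

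For the bands generated by the off-central blocks, I would produce explicit eigenvectors $\phi_\pm$ of $-i\mathcal{S}\mathcal{L}_n$ at $\lambda_\pm = i(E\pm E_0)$. The block $\mathcal{L}_n$ in (\ref{block-L-n}) decouples further into two identical $2\times 2$ sub-blocks (coupling $(\varphi_n,\chi_n)$ and $(\psi_n,\nu_n)$ separately), so the eigenvector can be written down in a few lines. A short computation yields
\[
\langle \mathcal{L}_n \phi_+,\phi_+\rangle_{\ell^2} = C_+\, \Omega E_0 (E_0+E), \qquad \langle \mathcal{L}_n \phi_-,\phi_-\rangle_{\ell^2} = C_-\, \Omega E_0 (E_0-E),
\]
with strictly positive normalization constants $C_\pm$. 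The signature at $\lambda_+$ is therefore $\mathrm{sign}(\Omega)$, and that at $\lambda_-$ is $\mathrm{sign}(\Omega(E_0-E))$. Under the hypothesis $|\Omega|>|\gamma|$ and $E>0$, the case analysis is routine: on (a) with $\Omega>0$ and $E>E_0$ the signs are $(+,-)$; on (b) and (c) with $\Omega<-|\gamma|$ the sign at $\lambda_+$ is $-$ while the sign at $\lambda_-$ tracks whether $E$ exceeds $E_0$ or not, matching the claims.

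For the eigenvalues $\pm\lambda_0$ coming from the central block, the same philosophy applies but the algebra is heavier. Since $\mathcal{S}^2 = I$, the eigenvalue equation $\mathcal{S}\mathcal{L}_0 \phi_0 = i\lambda_0 \phi_0$ gives $\mathcal{L}_0 \phi_0 = i\lambda_0 \mathcal{S}\phi_0$, and hence
\[
\langle \mathcal{L}_0 \phi_0, \phi_0 \rangle_{\ell^2} = i\lambda_0\, \langle \mathcal{S} \phi_0, \phi_0\rangle_{\ell^2},
\]
reducing the problem to evaluating the real quantity $\langle \mathcal{S}\phi_0,\phi_0\rangle$ for the eigenvector $\phi_0$. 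I would extract $\phi_0$ from the $4\times 4$ system via the parametrization $U_0 = Ae^{i\theta}$ and the relations (\ref{parameterization-A})--(\ref{branch-E-A}), then simplify using symbolic computation (in the spirit of the MAPLE-based manipulations already used in the proofs of Theorems \ref{theorem-critical-point} and \ref{theorem-stability}). The resulting sign should factor through $\mathrm{sign}(4A^2+\Omega)$ (positive on branches (a) and (b), negative on branch (c)), yielding the claimed positive signature of $\pm\lambda_0$ on (a) and (b) and the negative signature on (c). For the parameter window $|E|<E_0$, $\Omega<-|\gamma|$, one also has an independent cross-check: since $\mathcal{H}_E''$ has finitely many positive eigenvalues and $K_{HAM}$ is already pinned down by Lemma \ref{theorem-index}, the band signatures just computed together with $K_{HAM}$ determine the signature of $\pm\lambda_0$ uniquely by subtraction.

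The main obstacle is this central-site computation. The block $\mathcal{L}_0$ has entries involving $\cos(2\theta)$ and mixed coupling terms, and closed-form simplification of $\langle \mathcal{S}\phi_0,\phi_0\rangle$ after substituting (\ref{parameterization-A})--(\ref{branch-E-A}) is where the bookkeeping is delicate, particularly because the sign of $4A^2+\Omega$ switches between branches. Once that sign is pinned down, the perturbation argument and the case analysis assemble the three-part conclusion of the lemma.
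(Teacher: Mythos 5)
Your treatment of the continuous bands is essentially the paper's argument: at $\epsilon=0$ you solve the decoupled $n\neq 0$ block at $\lambda=\lambda_\pm$, find the explicit eigenvector, and obtain $\langle \mathcal{L}_n\phi,\phi\rangle = \mathrm{const}\cdot\Omega E_0(E_0\pm E)$ with a positive constant (the paper gets exactly $2\Omega E_0(E_0\pm E)$ from the eigenvector $\varphi_n=-\Omega$, $\psi_n=0$, $\chi_n=\pm E_0+i\gamma$, $\nu_n=0$); your case analysis of the signs then matches the lemma, and the transfer to small $\epsilon>0$ by continuity of the spectral subspaces under the non-degeneracy condition (\ref{assumption-non-degeneracy}) is the same as in the paper.

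The gap is in the central-site part, i.e.\ the signature of $\pm\lambda_0$. Your identity $\langle \mathcal{L}_0\phi_0,\phi_0\rangle = i\lambda_0\langle \mathcal{S}\phi_0,\phi_0\rangle$ is correct and the route is legitimate, but you never carry out the computation: the statement that the sign ``should factor through $\mathrm{sign}(4A^2+\Omega)$'' is a conjecture, and you yourself flag it as the main obstacle. Your $K_{HAM}$ cross-check via Lemma \ref{theorem-index} does close branch (c) and branch (b) with $|E|<E_0$ (there $k_r=k_c=0$ by Theorem \ref{theorem-stability}, and the bands carry negative energy, so the sign of the isolated pair is forced), but it is unavailable precisely where your direct computation is missing: on branch (a) and on branch (b) with $E>E_0$ the operator $\mathcal{H}''_E$ has infinite-dimensional positive and negative parts, so the Hamilton--Krein theorem does not apply and nothing pins down the sign of $\langle\mathcal{S}\phi_0,\phi_0\rangle$. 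The paper closes this at no extra cost: by Theorem \ref{theorem-critical-point}, the nonzero eigenvalues $\mu_1,\mu_2,\mu_3$ of $\mathcal{L}_0$ are all positive on branches (a) and (b), so $\mathcal{L}_0$ is positive definite on the orthogonal complement of its one-dimensional kernel; since an eigenvector of $-i\mathcal{S}\mathcal{L}_0$ with $\lambda_0\neq 0$ cannot lie in the kernel, $\langle\mathcal{L}_0\phi_0,\phi_0\rangle>0$ immediately, which is the claimed positive signature (and on branch (c), where $\mu_1,\mu_3<0$, $\mu_2>0$ make $\mathcal{L}_0$ indefinite, the paper invokes Lemma \ref{theorem-index} exactly as you do). Replacing your unfinished symbolic computation by this observation would make the proof complete.
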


\begin{proof}
We proceed by the perturbation arguments from the limit $\epsilon = 0$, where
$-i \mathcal{S} \mathcal{H}''_E = - i \mathcal{S} \mathcal{L}$ is a block-diagonal operator consisting of
$4 \times 4$ blocks. In particular, we consider the blocks for $n \in \mathbb{Z} \backslash \{0\}$,
where $\mathcal{L}_n$ is given by (\ref{block-L-n}).
Solving (\ref{spectral-problem}) at $\epsilon = 0$ and $\lambda = \lambda_{\pm}$, we
obtain the eigenvector
$$
\varphi_n = -\Omega, \quad \psi_n = 0, \quad \chi_n = \pm E_0 + i \gamma, \quad \nu_n = 0, \quad n \in \mathbb{Z} \backslash \{0\}.
$$
As a result, we obtain for the eigenvector $\phi_n = (\varphi_n, \psi_n, \chi_n,\nu_n)$,
\begin{eqnarray*}
K_n := \langle \mathcal{L}_n \phi_n, \phi_n \rangle_{\ell^2} & = & \Omega (|\varphi_n|^2 + |\chi_n|^2) - (E + i \gamma) \chi_n \bar{\varphi}_n - (E-i \gamma) \varphi_n \bar{\chi}_n \\
& = & 2 \Omega E_0 (E_0 \pm E).
\end{eqnarray*}

For branch (a), $\Omega > |\gamma|$ and $E > E_0$. Therefore, $K_n > 0$
for $\lambda = \lambda_+$ and $K_n < 0$ for $\lambda = \lambda_-$.

For branch (b), $\Omega < -|\gamma|$ and either $E > E_0$ or $E \in (0,E_0)$. In either case,
$K_n < 0$ for $\lambda = \lambda_+$. On the other hand, for $\lambda = \lambda_-$, $K_n > 0$
if $E > E_0$ and $K_n < 0$ if $E \in (0,E_0)$.

For branch (c), $\Omega < -|\gamma|$ and $E \in (0,E_0)$. In this case, $K_n < 0$ for either
$\lambda = \lambda_+$ or $\lambda = \lambda_-$.

Finally, the Krein signature for the eigenvalue $\lambda = \lambda_0$ denoted by $K_0$ follows
from the computations of eigenvalues $\mu_{1,2,3}$ in the proof of Theorem \ref{theorem-critical-point}.
We have $K_0 > 0$ for branches (a) and (b) because $\mu_{1,2,3} > 0$ and
we have $K_0 < 0$ for branch (c) because $\mu_{1,3} < 0$, whereas the eigenvalue $\mu_2 > 0$
is controlled by the result of Lemma \ref{theorem-index}.

The signs of all eigenvalues are nonzero and continuous with respect to parameter $\epsilon$.
Therefore, the count above extends to the case of small nonzero $\epsilon$.
\end{proof}

The spectrum of $-i \mathcal{S} \mathcal{H}_E''$ is shown at Figure~\ref{spectra}.
Panels (a), (b) and (c) correspond to branches shown at Figure~\ref{branches}.

\begin{figure}
\centering
\includegraphics[scale=0.45]{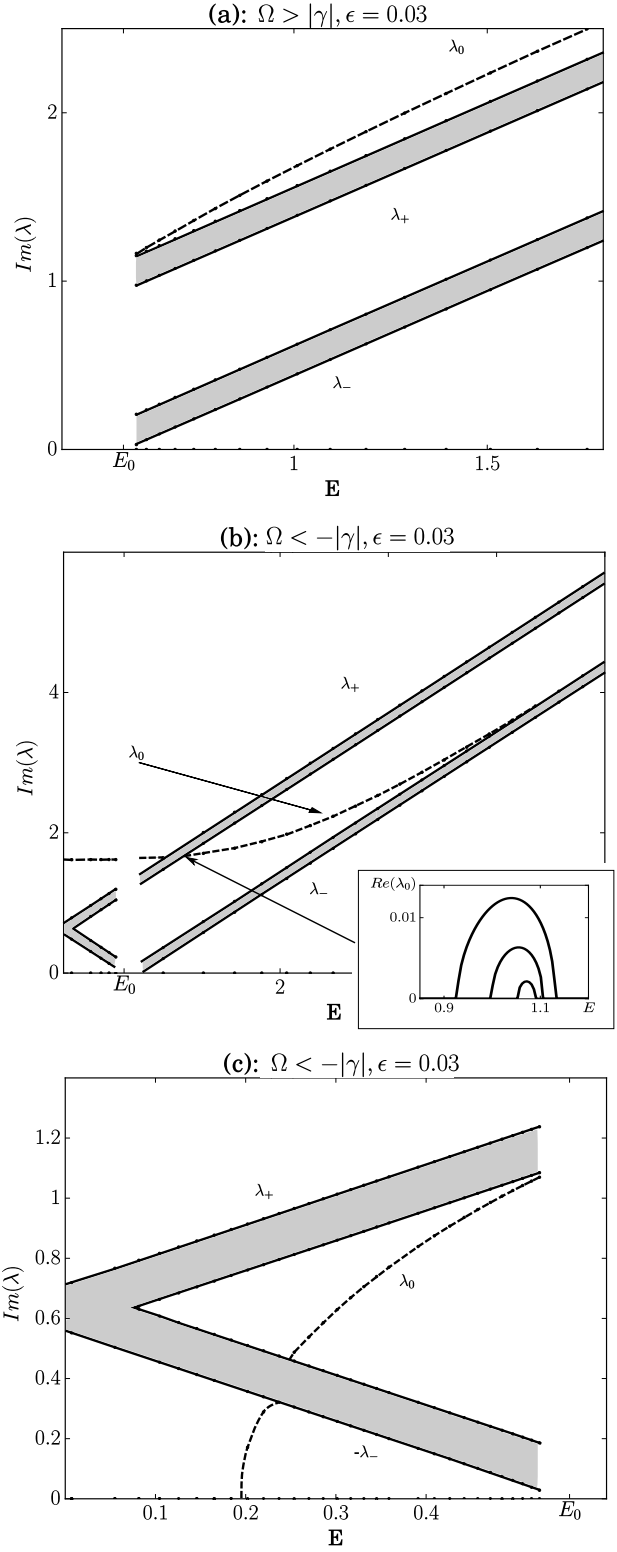}
\caption{The spectrum of $-i \mathcal{S} \mathcal{H}_E''$
for different branches of breathers.}
\label{spectra}
\end{figure}

\begin{itemize}
\item[(a)] We can see on panel (a) of Figure \ref{spectra} that
$\lambda_0,\lambda_{\pm}$ do not intersect for every $E > E_0$ and are
located within fixed distance $\mathcal{O}(1)$, as $|E|\to\infty$.
Note that the upper-most $\lambda_0$ and $\lambda_+$ have positive Krein signature,
whereas the lowest $\lambda_-$ has negative Krein signature, as is given by Lemma \ref{lemma-Krein}.

\item[(b)] We observe on panel (b) of Figure \ref{spectra} that
$\lambda_{+}$ intersects $\lambda_0$, creating a small bubble of instability in
the spectrum. The insert shows that the bubble shrinks as $\epsilon \to 0$,
in agreement with Theorem \ref{theorem-stability}. There is also an
intersection between $\lambda_{-}$ and $\lambda_0$, which does not create instability.
These results are explained by the Krein signature computations in Lemma \ref{lemma-Krein}.
Instability is induced by opposite Krein signatures between $\lambda_+$ and $\lambda_0$,
whereas crossing of $\lambda_-$ and $\lambda_0$ with the same Krein signatures is safe
of instabilities. Note that for small $E$, the isolated eigenvalue $\lambda_0$ is
located above both the spectral bands near $\lambda_+$ and $\lambda_-$.
The gap in the numerical data near $E = E_0$ indicates failure to continue the
breather solution numerically in $\epsilon$, in agreement with the proof
of Theorem \ref{theorem-existence-soliton}.

\item[(c)] We observe from panel (c) of Figure \ref{spectra} that
$\lambda_0$ and $-\lambda_{-}$ intersect but do not create instabilities, since
all parts of the spectrum have the same signature, as is given by Lemma \ref{lemma-Krein}.
In fact, the branch is both spectrally and orbitally stable as long as $\lambda_0 \in i \mathbb{R}$,
in agreement with Theorem \ref{theorem-bounded}.
On the other hand, there is $E_s \in (0,E_0)$, if $\Omega \in (-2 \sqrt{2} |\gamma|,-|\gamma|)$,
such that $\lambda_0 \in \mathbb{R}$ for $E \in (0,E_s)$, which indicates instability of branch (c),
again, in agreement with Theorem \ref{theorem-bounded}.
\end{itemize}

As we see on panel (b) of Figure \ref{spectra}, $\lambda_0$ intersects $\lambda_+$ for some $E=E_*>E_0$.
In the remainder of this section, we study whether this crossing point is always located on the right
of $E_0$. In fact, the answer to this question is negative. We shall prove for branch (b)
that the intersection of $\lambda_0$ with either $\lambda_+$ or $-\lambda_-$ occur either for $E_*>E_0$ or
for $E_*<E_0$, depending on parameters $\gamma$ and $\Omega$.

\begin{lemma}
Fix $\gamma \neq 0$, $\Omega < -|\gamma|$, and $E > 0$ along branch (b) of Lemma~\ref{lemma-dimer}.
There exists a resonance $\lambda_0 = \lambda_+$ at $E = E_*$ with $E_* > E_0$ if $\Omega \in (\Omega_*,-|\gamma|)$
and $E_* \in (0,E_0)$ if $\Omega \in (-5|\gamma|,\Omega_*)$, where
\begin{equation}
\label{omega-star}
\Omega_* := -\frac{\sqrt{1+5\sqrt{2}}}{\sqrt{2}}|\gamma|.
\end{equation}
Moreover, if $\Omega < -5 |\gamma|$, there exists a resonance $\lambda_0 = -\lambda_-$ at $E = E_*$
with $E_* \in (0,E_0)$.
\end{lemma}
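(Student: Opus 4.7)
The plan is to rescale to dimensionless parameters, express the three imaginary eigenvalues $\lambda_0$, $\lambda_+$, and $-\lambda_-$ along branch (b) as functions of a single variable, and locate crossings by sign analysis and the intermediate value theorem. I would introduce $\omega := |\Omega|/|\gamma| > 1$ and parametrize branch (b) by $\eta := (4A^2 + \Omega)/|\gamma| \in (1,\infty)$, the lower endpoint $\eta = 1$ corresponding to $A = A_+$ and hence $E = 0$. Substituting into the formulas of Theorem \ref{theorem-stability} and Lemma \ref{lemma-dimer} gives
\[
\frac{L_0^2}{\gamma^2} = \frac{\eta^3 + \omega}{\eta}, \qquad
\frac{E^2}{\gamma^2} = \frac{(2\eta + \omega)^2(\eta^2 - 1)}{\eta^2}, \qquad
\frac{E_0^2}{\gamma^2} = \omega^2 - 1,
\]
where $L_0 := |\lambda_0|/2$. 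A resonance $\lambda_0 = \lambda_+$ means $2L_0 = E + E_0$, while $\lambda_0 = -\lambda_-$ (requiring $E < E_0$) means $2L_0 = E_0 - E$; so I would define $F_+(\eta) := 2L_0 - E - E_0$ and $F_-(\eta) := 2L_0 + E - E_0$ and look for their zeros.

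Next I would evaluate $F_\pm$ at the two distinguished points $\eta = 1$ (where $E = 0$) and $\eta = \eta_0$ (the unique point in $(1,\infty)$ with $E = E_0$, which exists by monotonicity of $E(\eta)$ established in the proof of Lemma \ref{lemma-dimer}(b)), together with the asymptotics. A short computation yields
\[
\bigl[(2L_0)^2 - (E \pm E_0)^2\bigr]\Big|_{\eta=1} = -|\gamma|^2(\omega - 5)(\omega + 1),
\]
so $F_\pm(1)$ changes sign exactly at $\omega = 5$; at $\eta = \eta_0$ one has $F_-(\eta_0) = 2L_0(\eta_0) > 0$ while $F_+(\eta_0) = 2(L_0(\eta_0) - E_0)$ has an $\omega$-dependent sign; and $F_+(\eta) \to -\infty$, $F_-(\eta) \to +\infty$ as $\eta \to \infty$. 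The intermediate value theorem then delivers: for $\omega > 5$, a root of $F_-$ in $(1,\eta_0)$, producing the $\lambda_0 = -\lambda_-$ resonance with $E_* \in (0, E_0)$; for $1 < \omega < 5$, a root of $F_+$ in $(1,\infty)$, producing the $\lambda_0 = \lambda_+$ resonance, located on one side of $\eta_0$ or the other according to the sign of $F_+(\eta_0)$.

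The remaining task is to compute the threshold $\omega_*$ at which $F_+(\eta_0)$ changes sign, i.e.\ at which $L_0(\eta_0) = E_0$ and $E(\eta_0) = E_0$ hold simultaneously. This produces the coupled system
\[
\eta_0^3 - (\omega^2 - 1)\eta_0 + \omega = 0, \qquad
(2\eta_0 + \omega)^2(\eta_0^2 - 1) = \eta_0^2(\omega^2 - 1).
\]
Using the first relation to eliminate $\eta_0^3$ and $\eta_0^4 = \eta_0 \cdot \eta_0^3$ from the expansion of the second reduces to the quadratic
$(4\omega^2 - 7)\eta_0^2 + 4\omega(\omega^2 - 3)\eta_0 - 5\omega^2 = 0$,
whose discriminant collapses to the perfect square $4\omega^2(2\omega^2 - 1)^2$. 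The positive root is $\eta_0 = 5\omega/(4\omega^2 - 7)$, and combining with $\eta_0^2 = (\omega^2 + 2)/5$ (obtained from $\eta_0^2 = (\omega^2 - 1) - \omega/\eta_0$) gives $125\omega^2 = (\omega^2 + 2)(4\omega^2 - 7)^2$. With $w = \omega^2$ this factors as $(2w - 1)(4w^2 - 4w - 49) = 0$, and the unique root compatible with $\omega > 1$ is $\omega_*^2 = (1 + 5\sqrt{2})/2$, reproducing (\ref{omega-star}). A check at $\omega$ close to $1$ (where $\eta_0 \to 1$, $E_0 \to 0$, and $F_+(\eta_0) \to 2\sqrt{2}|\gamma| > 0$) together with continuity and the uniqueness of $\omega_*$ then forces $F_+(\eta_0) > 0$ on $(1, \omega_*)$ (giving $E_* > E_0$) and $F_+(\eta_0) < 0$ on $(\omega_*, 5)$ (giving $E_* \in (0, E_0)$).

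The main obstacle is the algebraic derivation of $\omega_*$: the coupled system in $(\eta_0, \omega)$ is genuinely quartic, and only because the discriminant of the reduced quadratic collapses to the perfect square $4\omega^2(2\omega^2-1)^2$ does the final polynomial in $w = \omega^2$ split off the physical factor $4w^2 - 4w - 49$. Without that simplification the threshold would be buried inside an unfactorable quartic; with it, the explicit form in (\ref{omega-star}) falls out cleanly.
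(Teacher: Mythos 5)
Your proposal is correct in substance and follows the same underlying strategy as the paper: identify the two degenerate parameter values by imposing the resonance at the endpoints $E_*=E_0$ (giving $\Omega_*$) and $E_*=0$ (giving $\Omega=-5|\gamma|$, your $\omega=5$), then distribute the resonances between these thresholds by continuity. The difference is in the algebraic organization: the paper solves the condition $\lambda_0^2=-E_0^2$ for $A^2$, substitutes into (\ref{branch-E-A}) and squares, arriving at $8\Omega^6-4\Omega^4\gamma^2-102\Omega^2\gamma^4-49\gamma^6=0$ (in $w=\Omega^2/\gamma^2$ this is $(2w+1)(4w^2-4w-49)=0$), whereas you rescale to $(\omega,\eta)$, eliminate $\eta_0^3,\eta_0^4$ linearly and exploit the perfect-square discriminant to reach $(2w-1)(4w^2-4w-49)=0$; both single out the same $\Omega_*$, and I verified your quadratic $(4\omega^2-7)\eta_0^2+4\omega(\omega^2-3)\eta_0-5\omega^2=0$, its discriminant $4\omega^2(2\omega^2-1)^2$, and the evaluation $4L_0^2-E_0^2=-\gamma^2(\omega-5)(\omega+1)$ at $\eta=1$. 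Your explicit intermediate-value setup with $F_\pm$ is in places tighter than the paper's ``by continuity'' phrasing; in particular, the sign change of $F_-$ between $\eta=1$ and $\eta_0$ for $\omega>5$ gives the $\lambda_0=-\lambda_-$ resonance with $E_*\in(0,E_0)$ cleanly, where the paper only continues from the degenerate case $E_*=0$.

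Two blemishes, one of which is a real (though small) gap. First, $F_+(\eta)\to-\infty$ as $\eta\to\infty$ is not correct: $2L_0-E\to\Omega$, so $F_+\to\Omega-E_0$, a finite negative limit (this is exactly the paper's asymptotics ${\rm Im}\,\lambda_0-{\rm Im}\,\lambda_+\to\Omega-E_0$); only the sign matters, so the IVT step survives. Second, your final inference is incomplete: positivity of $F_+(\eta_0(\omega))$ near $\omega=1$ together with uniqueness of its zero at $\omega_*$ yields $F_+(\eta_0)>0$ on $(1,\omega_*)$, but it does not exclude that $F_+(\eta_0)$ merely touches zero at $\omega_*$ and remains positive on $(\omega_*,5)$; to conclude $F_+(\eta_0)<0$ there (hence $E_*\in(0,E_0)$) you need one more ingredient, e.g.\ a sample evaluation in $(\omega_*,5)$ (at $\omega=3$ one finds $\eta_0\approx 1.18$, $2L_0(\eta_0)\approx 3.97\,|\gamma|<2E_0\approx 5.66\,|\gamma|$) or an argument that the zero at $\omega_*$ is a simple crossing. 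With that one-line check the proof closes; note the paper's corresponding step (the assertion that ${\rm Im}\,\lambda_0>{\rm Im}\,\lambda_+$ on $(0,E_0)$ for $\Omega\in(\Omega_*,-|\gamma|)$) is itself stated without full detail. A further sentence noting that for $4\omega^2\le 7$ the reduced quadratic has no root $\eta_0>1$ (so no threshold is missed when the leading coefficient degenerates or changes sign) would make the uniqueness claim airtight.
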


\begin{proof}
Let us first assume that there exists a resonance $\lambda_0 = \lambda_+$ at $E = E_* = E_0$
and find the condition on $\gamma$ and $\Omega$, when this is possible.
From the definitions (\ref{lambda-0-eig}) and (\ref{lambda-plus-minus-eig}), we obtain
the constraint on $A^2$:
\begin{equation*}
(4A^2 + \Omega)^2 - \frac{\Omega\gamma^2}{4A^2+\Omega} = E_0^2 = \Omega^2 - \gamma^2.
\end{equation*}
After canceling $4 A^2$ since $A^2 \geq A_+^2 > 0$ with $A_+^2$ given by (\ref{A-plus}),
we obtain
$$
16 A^4 + 12 \Omega A^2 + 2 \Omega^2 + \gamma^2 = 0,
$$
which has two roots
\begin{equation*}
A^2 = -\frac{3}{8}\Omega \pm \frac18 \sqrt{\Omega^2 - 4\gamma^2}.
\end{equation*}
Since $A^2 \geq A_+^2$, the lower sign is impossible because this leads to a contradiction
$$
\sqrt{|\Omega| - 2 |\gamma|} - \sqrt{|\Omega| + 2 |\gamma|} \geq 0.
$$
The upper sign is possible if $|\Omega| \geq 2|\gamma|$.
Using the parametrization~\eqref{branch-E-A}, we substitute the root for $A^2$ to the
equation $E_0^2 = E^2$ and simplify it:
\begin{eqnarray*}
\Omega^2 - \gamma^2 & = & \left(2 |\Omega| + \sqrt{\Omega^2 - 4 \gamma^2}\right)^2 \left[ 1 - \frac{4 \gamma^2}{\left(|\Omega| +
\sqrt{\Omega^2 - 4 \gamma^2}\right)^2} \right] \\
&=& \frac{2 \sqrt{\Omega^2 - 4 \gamma^2} \left(2 |\Omega| + \sqrt{\Omega^2 - 4 \gamma^2}\right)^2}{\left(|\Omega| +
\sqrt{\Omega^2 - 4 \gamma^2}\right)}.
\end{eqnarray*}
This equation further simplifies to the form:
\begin{equation*}
\sqrt{\Omega^2 - 4\gamma^2} (9\Omega^2 - 7\gamma^2) + \Omega (31\gamma^2 - 7\Omega^2) = 0.
\end{equation*}
Squaring it up, we obtain
\begin{equation*}
        8\Omega^6 - 4\Omega^4 \gamma^2 - 102 \Omega^2 \gamma^4 - 49 \gamma^6= 0,
\end{equation*}
which has only one positive root for $\Omega^2$ given by
\begin{equation*}
\Omega^2 = \frac{1 +  5\sqrt{2}}{2} \gamma^2.
\end{equation*}
This root yields a formula for $\Omega_*$ in (\ref{omega-star}).
Since there is a unique value for $\Omega \in (-\infty,-|\gamma|)$,
for which the case $E_* = E_0$ is possible,
we shall now consider whether $E_* > E_0$ or $E_* < E_0$ for
$\Omega \in (-\infty,\Omega_*)$ or $\Omega \in (\Omega_*,-|\gamma|)$.

To inspect the range $E_* < E_0$, we consider a particular case,
for which the intersection $\lambda_0 = \lambda_+ = -\lambda_-$
happens at $E = 0$. In this case, $A^2 = A_+^2$ given by (\ref{A-plus}),
so that the condition $\lambda_0^2 = -E_0^2$ can be rewritten as
\begin{equation*}
4(\gamma^2 - |\gamma| \Omega) = \Omega^2-\gamma^2.
\end{equation*}
There is only one negative root for $\Omega$ and it is given by $\Omega = -5 |\gamma|$.
By continuity, we conclude that $\lambda_0 = \lambda_+$ for $\Omega \in (-5 |\gamma|, \Omega_*)$
and $\lambda_0 = -\lambda_-$ for $\Omega \in (-\infty,-5|\gamma|)$, both
cases correspond to $E_* \in (0,E_0)$.

Finally, we verify that the case $\lambda_0 = \lambda_+$ occurs for $E_* > E_0$ if $\Omega \in (\Omega_*,-|\gamma|)$.
Indeed, $\lambda_0 = i (8 A^2 + 2 \Omega + \mathcal{O}(A^{-2}))$ and $\lambda_+ = i( 8 A^2 + \Omega + E_0 + \mathcal{O}(A^{-2}))$ 
as $A \to \infty$, so that ${\rm Im}(\lambda_0) < {\rm Im}(\lambda_+)$ as $E \to \infty$.
On the other hand, the previous estimates suggest that 
${\rm Im}(\lambda_0) > {\rm Im}(\lambda_+)$ for every $E \in (0,E_0)$
if $\Omega \in (\Omega_*,-|\gamma|)$. Therefore, there exists at least one intersection
$\lambda_0 = \lambda_+$ for $E_* > E_0$ if $\Omega \in (\Omega_*,-|\gamma|)$.
\end{proof}

\section{Summary}
\label{conclusion}

We have reduced Newton's equation of motion for coupled pendula shown on Figure \ref{pendulapic}
under a resonant periodic force to the $\PT$-symmetric dNLS equation (\ref{PT-dNLS}).
We have shown that this system is Hamiltonian with conserved energy (\ref{energy-PT-dNLS}) and an additional
constant of motion (\ref{charge-PT-dNLS}). We have studied breather solutions of this model,
which generalize symmetric synchronized oscillations of coupled pendula that arise if $E = 0$.
We showed existence of three branches of breathers shown on Figure \ref{branches}. We also
investigated their spectral stability analytically and numerically. The spectral information
on each branch of solutions is shown on Figure \ref{spectra}. For branch (c), we were also
able to prove orbital stability and instability from the energy method.
The technical results of this paper are summarized in Table 1 and described as follows.

\begin{table}[htb]
\begin{minipage}{\textwidth}
\centering
\caption{A summary of results on breather solutions for small $\epsilon$. Here, IB is a narrow
instability bubble seen on panel (b) of Figure \ref{spectra}.}
\vspace{1em}
\begin{tabular}{|c|c|c|c|c|}
        \hline
        & \multicolumn{2}{|c|}{$|E|>E_0$} & \multicolumn{2}{|c|}{$|E|<E_0$} \\
        \cline{2-5}
        \scell{Parameter \\intervals}        & $\Omega > |\gamma|$
                        & $\Omega < -|\gamma|$
                        & $\Omega < -|\gamma|$
                        & $\Omega < -|\gamma|$ \\ \hline
        Existence & point 1 & point 2 & point 3 & point 4 \\
        on Figure \ref{branches} & on branch (a) & on branch (b) & on branch (b) & on branch (c) \\ \hline
        Continuum & Sign-indefinite & Sign-indefinite & Negative & Negative \\ \hline
    %    \scell{Resonance} & None & 1 pair & 1 pair & None \\ \hline
        \scell{Spectral \\ stability} & Yes & Yes (IB) & Yes (IB)
        & \scell{Depends\\ on parameters} \\ \hline
\scell{Orbital \\ stability} & No & No & \scell{Yes \\ if $|\lambda_0| > |\lambda_{\pm}|$} & \scell{Yes \\ if spectrally stable} \\ \hline
\end{tabular}
\end{minipage}
\label{table-summary}
\end{table}

For branch (a), we found that it is disconnected from the symmetric synchronized oscillations at $E = 0$.
Along this branch, breathers of small amplitudes $A$ are connected to breathers of large amplitudes $A$. Every
point on the branch corresponds to the saddle point of the energy function between two wave continua of positive
and negative energies. Every breather along the branch is spectrally stable and is free of resonance between
isolated eigenvalues and continuous spectrum. In the follow-up work \cite{ChernPel}, we will prove
long-time orbital stability of breathers along this branch.

For branch (b), we found that the large-amplitude breathers as $E \to \infty$ are connected to the symmetric
synchronized oscillations at $E = 0$, which have the smallest (but nonzero) amplitude $A = A_+$.
Breathers along the branch are spectrally stable except for a narrow instability bubble, where
the isolated eigenvalue $\lambda_0$ is in resonance with the continuous spectrum.
The instability bubble can occur either for $E > E_0$, where the breather is a saddle point of the
energy function between two wave continua of opposite energies or for $E < E_0$, where the breather
is a saddle point between the two negative-definite wave continua and directions of positive energy.
When the isolated eigenvalue of positive energy $\lambda_0$ is above the continuous spectrum near
$\lambda_+$ and $\pm \lambda_-$, orbital stability of breathers can be proved by using the technique 
in \cite{Cuccagna}, which was developed for the dNLS equation.

Finally, for branch (c), we found that the small-amplitude breathers at $E \to E_0$ are connected
to the symmetric synchronized oscillations at $E = 0$, which have the largest amplitude $A = A_-$.
Breathers are either spectrally stable near $E = E_0$ or unstable near $E = 0$, depending
on the detuning frequency $\Omega$ and the amplitude of the periodic resonant force $\gamma$.
When breathers are spectrally stable, they are also orbitally stable for infinitely long times.

\vspace{0.25cm}

\noindent{\bf Acknowledgements.}
The authors thank I.V. Barashenkov and P.G. Kevrekidis for useful discussions regarding this project.
The work of A.C. is supported by the graduate scholarship at McMaster University.
The work of D.P. is supported by the Ministry of Education
and Science of Russian Federation (the base part of the State task No. 2014/133, project No. 2839).

\end{document}